\newtheoremstyle{dotless}{}{}{\itshape}{}{\bfseries}{}{1em}{} % define the theorem lemma definition style 
\theoremstyle{dotless}
\newtheoremstyle{theoremdd}% name of the style to be used
{}% measure of space to leave above the theorem. E.g.: 3pt
{}% measure of space to leave below the theorem. E.g.: 3pt
{}% name of font to use in the body of the theorem
{}% measure of space to indent
{}% name of head font
{}% punctuation between head and body
{ }% space after theorem head; " " = normal interword space
{\thmname{#1}\thmnumber{ #2'}\thmnote{ (#3)}.}
\newtheorem{theorem}{Theorem}
\newtheorem{lemma}{Lemma}
\newtheorem{definition}{Definition}
\newtheorem{corollary}{Corollary}
\let\@afterindenttrue\@afterindentfalse
\@citea\NAT@hyper@{%
     \NAT@nmfmt{\NAT@nm}%
     \hyper@natlinkbreak{\NAT@aysep\NAT@spacechar}{\@citeb\@extra@b@citeb}%
     \NAT@date}}
\@citea\NAT@nmfmt{\NAT@nm}%
\NAT@spacechar\NAT@hyper@{\NAT@date}}{}{}
\@citea\NAT@hyper@{%
     \NAT@nmfmt{\NAT@nm}%
     \hyper@natlinkbreak{\NAT@spacechar\NAT@@open\if*#1*\else#1\NAT@spacechar\fi}%
       {\@citeb\@extra@b@citeb}%
     \NAT@date}}
\@citea\NAT@nmfmt{\NAT@nm}%
\fi\NAT@hyper@{\NAT@date}}
\renewenvironment{proof}[1][\proofname]{\par
  \pushQED{\qed}%
  \normalfont \topsep6\p@\@plus6\p@\relax
  \trivlist
  \item[\hskip\labelsep
        \bfseries
    #1\@addpunct{}\enspace]\ignorespaces% DELETED
    %]\ignorespaces% ADDED
}{%
  \popQED\endtrivlist\@endpefalse
}
\def\keyword{%
  \def\sep{\unskip, }%
 \def\MSC{\@ifnextchar[{\@MSC}{\@MSC[2000]}}
  \def\@MSC[##1]{\par\leavevmode\hbox {\it ##1~MSC:\space}}%
  \def\PACS{\par\leavevmode\hbox {\it PACS:\space}}%
  \def\JEL{\par\leavevmode\hbox {\it JEL:\space}}%
  \global\setbox\keybox=\vbox\bgroup\hsize=\textwidth
  \normalsize\normalfont\def\baselinestretch{1}
  \parskip\z@
  \noindent\textbf{Keywords: }
  \raggedright                         % Keywords are not justified.
  \ignorespaces}
\def\endkeyword{\par \egroup}
\def\@author#1{\g@addto@macro\elsauthors{\normalsize%
    \def\baselinestretch{1}%
    \upshape\authorsep#1\unskip\textsuperscript{%
      \ifx\@fnmark\@empty\else\unskip\sep\@fnmark\let\sep=,\fi
      \ifx\@corref\@empty\else\unskip\sep\@corref\let\sep=,\fi
      }%
    \def\authorsep{\unskip { and} \space}%
    \global\let\@fnmark\@empty
    \global\let\sep\@empty}%
    \@eadauthor={#1}
}
\renewenvironment{abstract}{\global\setbox\absbox=\vbox\bgroup
  \hsize=\textwidth\def\baselinestretch{1}%
  \noindent\unskip%\textbf{Abstract}
 \par\noindent\unskip\ignorespaces} %leftskip
 {\egroup}
\long\def\pprintMaketitle{\clearpage
  \iflongmktitle\if@twocolumn\let\columnwidth=\textwidth\fi\fi
  \resetTitleCounters
  \def\baselinestretch{1}%
  \printFirstPageNotes
  \begin{flushleft}%center
 \thispagestyle{pprintTitle}%
   \def\baselinestretch{1}%
    \LARGE\@title\par\vskip15pt
    \Large\elsauthors\par\vskip10pt
    \itshape\elsaddress\par\vskip-3pt
    %\hrule\vskip12pt
    %\normalfont{Correspondence should be addressed to Ren-qian Zhang; zhangrenqian@buaa.edu.cn}
    \vskip5pt
    \ifvoid\absbox\else\unvbox\absbox\par\vskip10pt\fi
    \ifvoid\keybox\else\unvbox\keybox\par\vskip10pt\fi
    \hrule
    \vskip45pt
    \end{flushleft}%
  \gdef\thefootnote{\arabic{footnote}}%
  }
\gdef\emailauthor#1#2{\stepcounter{ead}%
     \g@addto@macro\@elseads{\raggedright%
      \let\corref\@gobble
      \eadsep\texttt{#1} \def\eadsep{\unskip,\space}}%(#2),remove the ba
}
 \def\ps@pprintTitle{%
     \let\@oddhead\@empty
     \let\@evenhead\@empty
     \def\@oddfoot{\footnotesize\itshape}
       %Preprint submitted to \ifx\@journal\@empty Elsevier
       %\else\@journal\fi\hfill\today}%
     \let\@evenfoot\@oddfoot}
     \def\printFirstPageNotes{%
  \iflongmktitle
   \let\columnwidth=\textwidth\fi
  \ifx\@tnotes\@empty\else\@tnotes\fi
  \ifx\@nonumnotes\@empty\else\@nonumnotes\fi
  \ifx\@cornotes\@empty\else\@cornotes\fi
  \ifx\@elseads\@empty\relax\else
  \let\thefootnote\relax
   \noindent\footnotetext{\ifnum\theead=1\relax
      {Email:\space}\else
      {Email:\space}\fi
     \@elseads
     }\fi
  \ifx\@elsuads\@empty\relax\else
   \let\thefootnote\relax
   \footnotetext{\textit{URL:\space}%
     \@elsuads}\fi
  \ifx\@fnotes\@empty\else\@fnotes\fi
  \iflongmktitle\if@twocolumn
   \let\columnwidth=\Columnwidth\fi\fi
}
\journal{Journal of the Operational Research Society}
\newcommand{\tabincell}[2]{\begin{tabular}{@{}#1@{}}#2\end{tabular}}
\begin{document}
\begin{frontmatter}

\title{Capital flow constrained lot sizing problem with loss of goodwill and loan}

\author{Zhen Chen}
%\ead{chen\_zhen@buaa.edu.cn}
\author{Ren-qian Zhang\corref{cor1}}
\ead{zhangrenqian@buaa.edu.cn}

\cortext[cor1]%{Correspondence: Ren-qian Zhang, School of Economics and Management, Beihang University, Beijing 100191, China}
{\emph{Correspondence: Ren-qian Zhang, School of Economics and Management, Beihang University, Beijing 100191, China}}

\address{School of Economics and Management, Beihang University, Beijing 100191, China}

\begin{abstract}
We introduce capital flow constraints, loss of good will and loan to the lot sizing problem. Capital flow constraint is different from traditional capacity constraints: when a manufacturer launches production, its present capital should not be less than its present total production cost; otherwise, it must decrease production quantity or suspend production. Unsatisfied demand in one period may cause customer's demand to shrink in the next period considering loss of goodwill. Fixed loan can be adopted in the starting period for production. A mixed integer model for a deterministic single-item problem is constructed. Based on the analysis about the structure of optimal solutions, we approximate it to a traveling salesman problem, and divide it into sub-linear programming problems without integer variables. A forward recursive algorithm with heuristic adjustments is proposed to solve it. When unit variable production costs are equal and goodwill loss rate is zero, the algorithm can obtain optimal solutions. Under other situations, numerical comparisons with CPLEX 12.6.2 show our algorithm can reach optimal in most cases and has computation time advantage for large-size problems. Numerical tests also demonstrate that initial capital availability as well as loan interest rate can substantially affect the manufacturer's optimal lot sizing decisions.
\end{abstract}

\begin{keyword}
%% keywords here, in the form: keyword \sep keyword
lot sizing; customer good will; capital flow; profit maximization; loan
\end{keyword}

\end{frontmatter}

%\linenumbers

%% main text
\section{Introduction}

The lot sizing problem was first introduced and solved by \citet{wagner1958dynamic1}. They proposed a polynomial algorithm to solve the single-item uncapacitated deterministic lot sizing problem, which has a computational complexity of $O(T^{2})$ and $T$ is the length of planning horizon. \citet{wagelmans1992economic} developed an $O(T\lg T)$ algorithm for the Wagner-Whitin cases. There is now abundant literature in this area that extends the basic model, such as the capacitated lot sizing problem, multi-item lot sizing problem, multi-level lot sizing problem, stochastic lot sizing problem, etc. This has also resulted in the inflation of the problems complexity. Mathematical programming heuristics, Lagrangian relaxation heuristics, decomposition and aggregation heuristics, meta heuristics, problem-specific greedy heuristics, piecewise linear approximation methods are used to solve different lot sizing problems. Some works adopting those methods can be found in \cite{gonzalez2011heuristic10}, \cite{absi2013heuristics20}, \cite{toledo2015relax}, \cite{rossi2015piecewise}, \cite{molina2016mip}. Comprehensive reviews on lot sizing problem could be addressed in \cite{maes1988multi2}, \cite{karimi2003capacitated3}, \cite{brahimi2006single4}, \cite{jans2007meta5}, \cite{buschkuhl2010dynamic6}, \cite{brahimi2017single}. 

In addition to large amount of papers setting the objective to minimize total cost in the lot sizing problems, there are also some works that formulate profit maximization models. \citet{aksen2003single7} developed a forward recursive dynamic programming algorithm to solve a single-item lot sizing problem with immediate lost sales for a profit maximization model. \citet{berk2008single8} investigated the single-item lot sizing problem for a warm/cold process with immediate lost sales and established theoretical results on the structure of optimal solutions.  \citet{haugen2007profit9} built a multi-product capacitated lot sizing profit maximization model in which there is a negative relation between product price and customer demands. \citet{gonzalez2011heuristic10} also addressed a multi-product capacitated lot sizing problem with pricing, setup time, and more general holding costs. \citet{sereshti2013profit11} studied the scheduling problem with demand choice flexibility and evaluated the efficiency of two mathematical models. However, those works didn't take capital flow constraints in their profit maximization models.

Loss of goodwill has been taken into consideration by some researchers in many inventory models. \citet{shi2004integrated12} employed goodwill loss in a news vendor model. \cite{pentico2009deterministic113,pentico2009deterministic214} considered loss of goodwill in a deterministic EOQ model and EPQ model respectively. \citet{chen2009production15} assumed supply shortage and time window violation will cause goodwill loss in a production scheduling and vehicle routing problem. In the lot sizing model, the concept of goodwill loss was first introduced by  \citet{hsu2001dynamic16}. In the above papers, customer goodwill loss was assumed to result in a penalty cost. Some researchers began to consider that the loss of customer goodwill would manifest itself in terms of reduced future sales. This was first observed by \citet{graves1999manufacturing17}. Empirical evidence presented by \citet{blazenko2003corporate18} showed that goodwill loss occurred when there were alternative sources of supply for customers, and if there were better alternative sources, the prospect of long-term revenue loss was greater.  \citet{aksen2007loss19} developed a single-item lot sizing model in which the unsatisfied demand in a given period caused the demand in the next period to shrink due to the loss of customer goodwill. \citet{absi2013heuristics20} addressed the multi-item capacitated lot sizing problem with setup times and lost sales, and they used a Lagrangian relaxation of the capacity constraints to decompose it into single-item uncapacitated sub-problems.

In business transactions, once a manufacturer encounters capital shortage, it needs to borrow money to maintain production; otherwise, it has to reduce or even cancel the production and could not provide sufficient products for their clients. A survey of 531 businesses that went bankrupt during the calendar year 1998 in \citet{bradley2000lack} pointed out that inadequate financial planning was one of main reasons for their business failing. A report by \citet{coughtrie2009restructuring} showed shortage of capital accounted for 17\% of company bankruptcies in Australia in 2008. \citet{elston2011financing22} found that 84\% of high-tech entrepreneurs in the US had experienced a shortage of capital at some time. To deal with capital shortage problem, loan is a widespread and effective option for many companies. After an agreement between a borrower and a lender is made, the borrower receives money from the lender, and is obligated to pay back an equal amount of money with the addition of some interests to the lender at a later time. A survey in \citet{Survey2014} about Small and medium enterprises (SMEs) in the 28 countries of the European Union showed that SMEs preferred to use bank loan, bank overdraft and trade credit. A report by \citet{IpsosMORI2017} based on the surveys of over 1000 SMEs from 2014 to 2016 in UK revealed that bank loan, friend loan and third party loan together accounted for about 40\% --- and ranked first --- of all external finances in the three years.

% A simple model in Uhrig-Homburg \cite{uhrig2005cash21} showed that cash flow shortage is an endogenous bankruptcy reason for the leveraged companies. Elston and Audretsch \cite{elston2011financing22} cited a survey data suggesting that 84\% of high-tech entrepreneurs in the US have experienced a shortage of capital at some time, with another 48\% currently experiencing liquidity constraints. An investigating report on the small-medium companies in Shen Zhen, China, showed that approximately 75\% percent of small-medium companies are experiencing capital shortage, and that the shortage amount is up to 1 billion dollars \cite{report201223}.

Relevant works taking capital flow or financing into account in inventory management problems are the following. \citet{buzacott2004inventory} adopted a news vendor model to analyze the importance of joint consideration of production and financial decisions.  \citet{chao2008dynamic} investigated a multi period news vendor problem constrained by cash flow and proved the optimality of a base stock policy.  \citet{gong2014dynamic} extended the model by considering short term financing.  \citet{zeballos2013single} built a periodic review inventory problem with working capital constraints, payment delay and multiple sources of financing. The above mentioned works are not for lot sizing problem and there are no fixed ordering costs in their models. Considering capital flow constraints, \citet{Chen161} built a single-item lot sizing model with trade credit and apply a dynamic programming algorithm to solve it.

From the literature review above we can find that, most previous works on lot sizing problems seldom consider the influence of capital flow constraints and external financing to production planning. This, together with our discussion on the importance of capital management and widespread use of loan by manufacturers, motivate our study. The main contributions of this paper are the following. 
\begin{itemize}
\item We introduce capital flow constraints to the traditional lot sizing problems and formulate a profit maximization model.
\item Optimality structures of the solution to the problem is discussed, and we develop a polynomial forward recursive algorithm with some heuristic adjustments to solve the problem. 
\item A common supply chain financing behavior, loan, is also introduced and discussed in the lot sizing problem.
\end{itemize}

The rest of this paper is organized as follows. Section 2 formulates the mathematical model and discusses its NP-hardness. Section 3 gives some mathematical properties and approximates this model to a traveling salesman problem. Section 4 divides the problem into sub-linear problems and proposes an algorithm with some heuristic adjustment techniques. Section 5 implements the numerical analysis: use some numerical cases to show the influence of capital flow constraints; compare the performance of our algorithm on large random generated test cases with CPLEX; analyze the main factors affecting the performance of our algorithm. Finally, section 6 concludes the paper and outlines future research directions.

\section{Problem description}

\subsection{Notations and assumptions}
We adopt the following notations for our model. Some relevant notations will be introduced when needed. 

\noindent
\begin{tabular}{ll}
$t\quad$     &index of a period, $t=1,2,\dots,T$.\rule{0 pt}{5 mm}\\%控制这一行高度
$d_{t}\quad$ &demand in period $t$.\\
$p_{t}\quad$ &unit selling price in period $t$.\\
$c_{t}\quad$ &unit production cost (variable cost) in period $t$.\\
$h_{t}\quad$ &unit inventory holding cost in period $t$.\\
$s_{t}\quad$ &production launching cost (fixed cost) in period $t$.\\

$B_{c}\quad$ &quantity of self-owned capital at the beginning of period 1.\\
$B_{L}\quad$ &quantity of loan at the beginning of period 1.\\
$B_{0}\quad$ &total initial capital at the beginning of period 1 and $B_{0}=B_{c}+B_{L}$.\\
$I_{0}\quad$ &initial inventory level at the beginning of period 1.\\
$T_{L}\quad$ &length of loan, $T_{L}\leq T$.\\
$r\quad$     &interest rate of loan.\\
$\beta\quad$ &customer goodwill loss rate.\\
$M\quad$     &a large number.
\end{tabular}

\vspace{3mm}
The decision variables used in the models include the following:\\
\begin{tabular}{ll}
$B_{t}\quad$ &end-of-period capital for period $t$.\rule{0 pt}{5 mm}\\
$I_{t}\quad$ &end-of-period inventory level for period $t$.\\
$x_{t}\quad$ &{a binary variable signaling whether the production occurs in period $t$.}\\
$y_{t}\quad$ &production quantity in period $t$.\\
$w_{t}\quad$  &demand shortage (lost sales) in period $t$, and we define $w_{0}=0$.\\
$Ed_{t}\quad$ &effective demand in period $t$ when considering customer goodwill loss.\\
$v_{t}\quad$  &realized demand in period $t$ and $ v_{t}= Ed_{t}-w_{t}$.\\
$\delta_{t}\quad$  &a binary variable signaling whether $Ed_{t}$ is positive.
\end{tabular}

\vspace{3mm}
In our problem, we make the following assumptions:
\begin{enumerate}[partopsep=1pt,itemsep=0pt,parsep=0pt]%,leftmargin=*
  \item Initial  capital of period $t$ should not be less than the total production cost in period $t$, namely, $B_{t-1}\geq s_{t}x_{t}+c_{t}y_{t}$, in which initial capital of period $t$ is $B_{t-1}$, and the total production cost in period $t$ is production launching cost $s_{t}x_{t}$ plus variable production cost $c_{t}y_{t}$.\label{assume1}
  \item End-of-period capital for each period should be not less than 0, namely, $B_{t}\geq 0$.\label{assume2}
  \item Initial inventory of the planning horizon is 0, namely, $I_{0}=0$.\label{assume3}
  \item No backorder is allowed.\label{assume4}
  \item The manufacturer could decide the realized quantity for customer's demand without paying penalty cost, but lost sales can cause the shrinking of demand in the next period.\label{assume5}
  \item The manufacturer uses loan in the first period and pays back the principal and interest after a certain length of periods; length of loan is shorter than the length of total planning horizon; mortgages are not required for loan. \label{assume6}
\end{enumerate}

The biggest difference between our problem and traditional lot sizing problem is Assumption \ref{assume1} and \ref{assume2}: how much to produce is constrained by present capital, and end-of-period capital for each period should be above zero to avoid bankruptcy. Assumption \ref{assume3} and \ref{assume4} are also the standard assumptions of the Wagner-Whitin model. 
Assumption \ref{assume5} means the manufacturer can decide how much products they want to provide for customers. Assumption \ref{assume6} defines the loan type in our paper. Loan length is smaller than the total planning horizon because lenders tend to have higher risks for longer loan length and they generally don't provide loan out of the planning horizon in order to reduce risks.

\subsection{Mathematical models for our problem}

Considering loss of customer goodwill, effective demand is the remnant demand after the goodwill loss from the original demand. As in \citet{aksen2007loss19}, the effective demand in period $t$ can be represented as:
\begin{align}
Ed_{t}=\max\{0,d_{t}-\beta w_{t-1}\},\quad   \forall t. \label{eq:edt}
\end{align}

Eq. \eqref{eq:edt} is a nonlinear equation. For convenience of computation, we bring several linear constraints to replace this nonlinear equation and construct the mixed integer programming model below.
\\

\textbf{Model P}
\begin{alignat}{2}
 &\max\quad && B_{T}-B_{c}-B_{L}=\sum_{t=1}^{T}\left[p_{t}(Ed_{t}-w_{t})-(h_{t}I_{t}+s_{t}x_{t}+c_{t}y_{t})\right]-B_{L} (1+r)^{T_{L}}\label{eq:objective}\\
&s.t.             &&\text{for }t=1,2,\dots,T \nonumber\\
&                        && y_{t}\leq Mx_{t},\label{con:setuppositive}\\ 
&                        &&s_{t}x_{t}+c_{t}y_{t}\leq B_{t-1},\label{con:capital}\\
&                        &&w_{t}\leq Ed_{t},\label{con:wtedt}\\  
&                        &&I_{t}=I_{t-1}+y_{t}-Ed_{t}+w_{t}, \label{eq:inventoryflow} \\
&                        && B_{0}=B_{c}+B_{L}\label{eq:inicapital},\\
&                        &&B_{t}=
\begin{cases} 
 B_{t-1}+p_{t}(Ed_{t}-w_{t})-h_{t}I_{t}-s_{t}x_{t}-c_{t}y_{t},\quad &t\neq T_{L},\\
 B_{t-1}+p_{t}(Ed_{t}-w_{t})-h_{t}I_{t}-s_{t}x_{t}-c_{t}y_{t}-B_{L}(1+r)^{T_{L}},\quad &t=T_{L},\\
 \end{cases}\label{eq:capitalflow}\\
 &                       &&d_{t}\leq \beta w_{t-1}+M\delta_{t}, \label{con:dtwt1}\\
&                        &&d_{t}\geq \beta w_{t-1}-M(1-\delta_{t}), \label{con:dtwt2}\\
&                        &&Ed_{t}\leq d_{t}-\beta w_{t-1}+M(1-\delta_{t}), \label{con:edtdt1}\\
&                       &&Ed_{t}\geq d_{t}-\beta w_{t-1}-M(1-\delta_{t}), \label{con:edtdt2}\\
&                       &&Ed_{t}\leq d_{t}\delta_{t}, \label{con:edtdt3}\\
&                      &&I_{0}=0,I_{t}\geq 0,\label{eq:i0it}\\  
&                      &&Ed_{t}\geq 0, w_{t}\geq 0,y_{t}\geq 0,\label{con:edtwtyt}\\  
&                      &&x_{t}\in \{0,1\}, \delta_{t}\in\{0,1\}. \label{con:xtdeltat} 
\end{alignat}

The objective defined by Eq. \eqref{eq:objective} is to maximize the capital increment from the beginning of the planning horizon to the final period, where the realized sales in period $t$ are given by $Ed_{t}-w_{t}$, revenue in period $t$ is $p_{t}(Ed_{t}-w_{t})$, total cost in period $t$ is $h_{t}I_{t}+s_{t}x_{t}+c_{t}y_{t}$, or $h_{t}I_{t}+s_{t}x_{t}+c_{t}y_{t}+B_{L}(1+r)^{T_{L}}$ if period $t$ should pay back the loan.  

Constraint \eqref{con:setuppositive} enforces setups with positive production in each period. Constraint \eqref{con:capital} represents Assumption \ref{assume1} and \ref{assume2}:  initial  capital in period $t$ should be not less than the total production cost in period $t$. It also ensures the non-negativity of $B_{t-1}$, which avoids bankruptcy. Constraint \eqref{con:wtedt} ensures that any lost demand $w_{t}$ in period $t$ not exceed the effective demand of that period. Constraint \eqref{eq:inventoryflow} provides the inventory flow balance equation, and Constraints \eqref{eq:inicapital} and \eqref{eq:capitalflow} define the capital flow balance.

Constraints \eqref{con:dtwt1}-\eqref{con:edtdt3} are the linear descriptions of Eq. \eqref{eq:edt}. Among them, Constraints \eqref{con:dtwt1} and \eqref{con:dtwt2} ensure the binarity of $\delta_{t}$: if $d_{t}\leq \beta w_{t-1}$, $\delta_{t}=0$ and effective demand $Ed_t$ is also 0, else, $\delta_{t}=1$ and effective demand $Ed_t$ is positive;  Constraint \eqref{con:edtdt1}, \eqref{con:edtdt2} and \eqref{con:edtdt3} determine the value of effective demand $Ed_{t}$: if $d_{t}\leq\beta w_{t-1}$, $\delta_{t}=0$ and $Ed_{t}=0$; else, $\delta_{t}=1$ and $Ed_{t}=d_{t}-\beta w_{t-1}$. Constraints \eqref{eq:i0it}, \eqref{con:edtwtyt} and \eqref{con:xtdeltat} guarantee the non-negativity and binarity of variables. Constraint \eqref{eq:i0it} also represents Assumption \ref{assume3} and \ref{assume4} in our problem.

Note that if $\beta=0$, this model is transformed to a capital flow constrained problem without loss of goodwill.

\subsection{Computational complexity of Model P}
The single-item capacitated lot sizing problem has been shown by \citet{bitran1982computational} to be NP-hard. As for our problem, the capital flow constraint $c_{t}y_{t}+s_{t}x_{t}\leq B_{t-1}$ is a capacity constraint by removing $s_{t}x_{t}$ and replacing $B_{t-1}$ with $C_{t}$, where $C_{t}$ is the capacity in period $t$. Therefore, capital flow constraint is a special type of capacity constraints, and model P is also a NP-hard problem. 

\section{Mathematical properties}
\label{Sec3}
To describe the mathematical properties of our problem, we first define the concepts of \emph{production cycle} and \emph{production round}.

\begin{definition}
In a production plan, if the manufacturer launches production at the beginning  of period $m$, and it does not launch new production till the end of period $t$ ($m\leq t\leq T$), we call period $m$ to period $t$ a production cycle. 
\end{definition}

\begin{definition}
In a production plan, for one or more consecutive production cycles that last from the beginning of $m$ to the end of period $t$, if initial inventory for period $m$ and end-of-period inventory for period $t$ are zeros, we call period $m$ to period $t$ a production round. 
\end{definition}

\subsection{Properties about initial inventory and capital}
\begin{lemma} 
If unit variable production costs are equal, namely, $c_{t}=c$, $\forall t$, for any production cycle starting at period $t+1$ ($t+1=1,\dots,T$),  the optimal solution satisfies $I_{t} y_{t+1}=0$. \label{lemma1}
\end{lemma}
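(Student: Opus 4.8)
The plan is to establish this zero-inventory-ordering property by a production-shifting exchange argument, exploiting the fact that equal variable costs $c_t=c$ make the \emph{timing} of production free as far as the variable cost is concerned. First I would observe that at the start of a production cycle we have $y_{t+1}>0$, so the claim $I_t y_{t+1}=0$ is equivalent to $I_t=0$. I would then argue by contradiction: suppose an optimal solution has a production cycle beginning at period $t+1$ with $I_t>0$. Because $I_0=0$ (Assumption \ref{assume3}) and inventory can only be created by production, a positive $I_t$ must originate from the preceding production cycle; let $m\le t$ be the period at which that cycle launches production, so $y_m>0$ and no production occurs in $m+1,\dots,t$. I would set $\epsilon=\min\{I_t,y_m\}>0$ and construct a competing solution that moves $\epsilon$ units of production from period $m$ to period $t+1$, replacing $y_m$ by $y_m-\epsilon$ and $y_{t+1}$ by $y_{t+1}+\epsilon$, while leaving every demand, effective-demand, lost-sales, and realized-sales variable untouched.

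The heart of the argument is the feasibility check, and in particular the capital-flow restrictions \eqref{con:capital} and \eqref{eq:capitalflow} together with $B_t\ge 0$ (Assumptions \ref{assume1}--\ref{assume2}), which is exactly what distinguishes this problem from the classical Wagner--Whitin setting. Since no production happens in $m+1,\dots,t$, the balance \eqref{eq:inventoryflow} makes inventory non-increasing there, so $0<I_t\le I_{t-1}\le\dots\le I_m$; lowering each of these by $\epsilon\le I_t$ keeps them non-negative, and because $I_t$ drops by $\epsilon$ while $y_{t+1}$ rises by $\epsilon$, the level $I_{t+1}$ and every inventory from period $t+1$ onward is unchanged. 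For the capital, reducing $y_m$ by $\epsilon$ raises $B_m$ by $(c+h_m)\epsilon$, and the lower holding charges propagate through \eqref{eq:capitalflow} so that each $B_i$ with $m\le i\le t$ increases by $\bigl(c+\sum_{k=m}^{i}h_k\bigr)\epsilon$; hence constraint \eqref{con:capital} at period $t+1$, namely $s_{t+1}x_{t+1}+c\,y_{t+1}\le B_t$, still holds because its left side rises by only $c\epsilon$ while $B_t$ rises by at least $c\epsilon$, and the non-negativity requirements $B_i\ge 0$ are preserved a fortiori. The goodwill constraints \eqref{con:dtwt1}--\eqref{con:edtdt3} are untouched because no sales variable changed.

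Finally I would compare objectives. Since the variable cost is the same constant $c$ in both periods, the only net effect on \eqref{eq:objective} is the saved holding cost, so $B_T$, and therefore the objective, increases by exactly $\epsilon\sum_{k=m}^{t}h_k\ge 0$; the shift is thus never worse, and strictly better whenever some $h_k>0$. This contradicts strict optimality unless $I_t=0$, and when the relevant holding costs all vanish one still obtains an optimal solution with $I_t=0$ by pushing the exchange to the boundary. I expect the main obstacle to be the capital-flow bookkeeping: one must verify carefully that deferring production only \emph{relaxes} \eqref{con:capital} and never drives any $B_i$ negative, rather than simply invoking a cost comparison as in the uncapacitated case. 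A secondary technicality is the case $I_t>y_m$ (which can only occur if the earlier cycle itself violates the property, i.e.\ $I_{m-1}>0$); there the single exchange empties $y_m$, additionally saving the setup $s_m$, and iterating the construction cycle-by-cycle from the last production cycle backward drives every entering inventory to zero.
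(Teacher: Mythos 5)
Your proposal is correct and follows essentially the same route as the paper's proof: an exchange argument that shifts the excess inventory's worth of production from the preceding cycle's launch period $m$ to period $t+1$, leaves all demand and sales variables untouched, and shows the final capital rises by the saved holding costs $\sum_{i=m}^{t}h_i$ times the shifted quantity once the equal unit costs cancel. Your additional bookkeeping — verifying that the shift only relaxes constraint \eqref{con:capital} at each intermediate period and at $t+1$, and handling the case $I_t>y_m$ by iterating backward — is more careful than the paper, which simply asserts feasibility, but it is the same underlying argument.
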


\begin{proof}
Lemma 1 is apparently true when $t+1=1$ because $I_{0}=0$. For $t+1>1$, if there is a solution that does not satisfy Lemma 2, namely, $I_{t}>0$ and $y_{t+1}>0$, assume that period $t+1$'s former production cycle begins at period $m$ ($1\leq m\leq t$), and the production cycle beginning at period $t+1$ lasts till period $n$ ($n\leq T$). The production plan is shown in Figure \ref{fig 2}.

\begin{figure}[h]  %插入图形时用latex编译
\centering\includegraphics[scale=0.8]{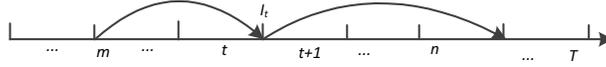}
\caption{Sketch of production plan when $I_{t}>0$ and $y_{t+1}>0$.}\label{fig 2}
\end{figure}

According to capital flow balance equation \eqref{eq:capitalflow}, end-of-period capital for period $t$ and period $T$ are given by the following equations.
\begin{align}
B_{t}&=\begin{cases}
B_{m-1}+\sum\nolimits_{i=m}^{t}p_{i}(Ed_{i}-w_{i})-s_{m}-c_{m}y_{m}-\sum\nolimits_{i=m}^{t}h_{i}I_{i},\quad &t\neq T_{L},\\
B_{m-1}+\sum\nolimits_{i=m}^{t}p_{i}(Ed_{i}-w_{i})-s_{m}-c_{m}y_{m}-\sum\nolimits_{i=m}^{t}h_{i}I_{i}-B_{L}(1+r)^{T_{L}}, &t=T_{L},
\end{cases}\\
B_{T}&=
\begin{cases}
B_{t}+\sum\nolimits_{i=t+1}^{T}\left[p_{i}(Ed_{i}-w_{i})-(h_{i}I_{i}+s_{i}y_{i}+c_{i}y_{i})\right],\quad &T\neq T_{L},\\
B_{t}+\sum\nolimits_{i=t+1}^{T}\left[p_{i}(Ed_{i}-w_{i})-(h_{i}I_{i}+s_{i}y_{i}+c_{i}y_{i})\right]-B_{L}(1+r)^{T_{L}}, &T=T_{L}.
\end{cases}
\end{align}

If the production quantity in period $m$ reduces $I_{t}$ and the production quantity in period $t+1$ increases $I_{t}$, then the effective demands $Ed_{i}$ ($i=1,\dots,T$) would not be influenced and the production plan is still feasible. $B_{t}$ and $B_{T}$ change to the following:
\begin{align}
B_{t}^{'}&=B_{t}+c_{m}I_{t}+\sum\nolimits_{i=m}^{t}h_{i}I_{t},\\
B_{T}^{'}&=B_{T}+c_{m}I_{t}+\sum\nolimits_{i=m}^{t}h_{i}I_{t}-c_{t+1}I_{t}.
\end{align}

If unit variable production costs are equal, $B_{T}^{'}=B_{T}+\sum\nolimits_{i=m}^{t}h_{i}I_{t}\geq B_{T}$, The final  capital increases. Therefore, $I_{t}>0$ and $y_{t+1}>0$ is not the optimal solution; optimal solution always satisfies $I_{t}y_{t+1}=0$.
\end{proof}

Lemma \ref{lemma1} is also known as the \emph{zero-inventory-ordering policy}, which means initial inventory of a production cycle is always zero. Define $f_t(I_{t-1},B_{t-1},w_{t-1})$ as the maximum capital increment during period $t, t+1,\dots,T$, given period $t$'s initial inventory $I_{t-1}$, initial capital $B_{t-1}$, and previous period's demand shortage quantity $w_{t-1}$. 

\begin{lemma} 
For any period $t$ ($t=1,\dots,T$), when initial inventory $I_{t-1}$ and last period's demand shortage $w_{t-1}$ are fixed, $f_t(I_{t-1},B_{t-1},w_{t-1})$ is nondecreasing with period $t$'s initial capital $B_{t-1}$. \label{lemma2}
\end{lemma}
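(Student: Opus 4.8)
The plan is to prove this by a coupling/feasibility argument: I will show that any feasible decision sequence available when starting period $t$ with capital $B_{t-1}$ remains feasible when starting with a larger capital $\widetilde{B}_{t-1} > B_{t-1}$ (holding $I_{t-1}$ and $w_{t-1}$ fixed), and that the resulting objective value is no smaller. Since $f_t$ is defined as a maximum over all feasible continuations, this immediately yields $f_t(I_{t-1},\widetilde{B}_{t-1},w_{t-1}) \geq f_t(I_{t-1},B_{t-1},w_{t-1})$, which is the claim.

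First I would fix an optimal (or arbitrary feasible) policy $(x_i, y_i, w_i, \delta_i)_{i=t}^{T}$ for the subproblem starting at $(I_{t-1},B_{t-1},w_{t-1})$, and apply the \emph{identical} decisions in the subproblem starting at $(I_{t-1},\widetilde{B}_{t-1},w_{t-1})$. The key observation is that the inventory recursion \eqref{eq:inventoryflow}, the effective-demand constraints \eqref{con:dtwt1}--\eqref{con:edtdt3}, and the lost-sales constraint \eqref{con:wtedt} depend only on $I_{t-1}$, $w_{t-1}$, and the chosen controls, none of which involve the capital level; hence the inventory trajectory, the $Ed_i$ values, and all non-capital constraints are satisfied verbatim. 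What must be checked are the two capital-dependent constraints, namely the capital-flow update \eqref{eq:capitalflow} and the production-financing constraint \eqref{con:capital} (equivalently Assumptions~\ref{assume1}--\ref{assume2}, requiring $B_{i-1}\geq s_i x_i + c_i y_i$ and $B_i \geq 0$).

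The central step is an induction on the period index $i$ showing that the capital gap is preserved: writing $\Delta = \widetilde{B}_{t-1} - B_{t-1} > 0$, I claim $\widetilde{B}_i - B_i = \Delta$ for all $i \geq t-1$ under the common policy. This follows because, for fixed controls, \eqref{eq:capitalflow} updates capital by adding the same period-$i$ profit term $p_i(Ed_i - w_i) - h_i I_i - s_i x_i - c_i y_i$ (and, at $i=T_L$, subtracting the same constant $B_L(1+r)^{T_L}$) in both systems; the increments are identical, so the constant offset $\Delta$ propagates unchanged. Consequently $\widetilde{B}_{i-1} = B_{i-1} + \Delta \geq B_{i-1} \geq s_i x_i + c_i y_i$, so \eqref{con:capital} still holds, and $\widetilde{B}_i = B_i + \Delta \geq B_i \geq 0$, so \eqref{assume2} still holds. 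Thus the common policy is feasible from the larger starting capital.

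The conclusion then follows from the objective: since the terminal capital satisfies $\widetilde{B}_T = B_T + \Delta$ and the objective \eqref{eq:objective} is (up to the fixed constant $B_c + B_L$) just $B_T$, the common policy yields an objective value larger by exactly $\Delta \geq 0$. Taking the optimal policy for the smaller-capital problem, we get $f_t(I_{t-1},\widetilde{B}_{t-1},w_{t-1}) \geq B_T + \Delta \geq f_t(I_{t-1},B_{t-1},w_{t-1})$, establishing monotonicity. I do not expect a serious obstacle here; the only point requiring care is the bookkeeping at period $T_L$ (where the loan repayment enters), but since that repayment is a fixed constant independent of capital, it cancels in the gap recursion and causes no difficulty. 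A secondary subtlety is confirming that no constraint silently couples capital back into the demand or inventory dynamics — inspection of \eqref{con:setuppositive}--\eqref{con:xtdeltat} confirms that capital appears only in \eqref{con:capital} and \eqref{eq:capitalflow}, so the decoupling used above is legitimate.
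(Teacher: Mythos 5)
Your proposal is correct and rests on the same underlying idea as the paper's proof --- extra initial capital can only enlarge the set of feasible plans, so the maximum cannot decrease --- but the execution differs in a way worth noting. The paper argues at the level of a single stage of its dynamic programming recursion \eqref{eq:recuredynamic}: it observes that raising $B_{t-1}$ leaves the bound $\overline{v}_t$ in \eqref{eq:vupbound} unchanged and weakly enlarges $\overline{y}_t$ in \eqref{eq:Xupbound}, and then asserts that suitable actions exist making the value no smaller; strictly speaking this one-stage argument needs to be closed by a backward induction (the larger resulting $B_t$ must be fed into $f_{t+1}$, whose monotonicity is exactly what is being proved), which the paper leaves implicit. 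You instead run a direct whole-horizon coupling: fix the optimal policy for the smaller capital, replay it verbatim from the larger capital, and propagate the constant gap $\Delta$ through the capital recursion \eqref{eq:capitalflow}, checking that \eqref{con:capital} and nonnegativity of $B_i$ are preserved and that the loan repayment at $T_L$ cancels in the gap. This buys you a self-contained argument with no hidden induction hypothesis, and your explicit check that capital enters only through \eqref{con:capital} and \eqref{eq:capitalflow} (so the inventory and effective-demand dynamics decouple) is exactly the point the paper's terser proof takes for granted. Both proofs are valid; yours is the more complete write-up of the same monotonicity-by-feasibility idea.
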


\begin{proof}
To prove this property, we build a dynamic programming model for our problem. For any at period $t$ ($t=1,\dots,T$), its states are: initial inventory $I_{t-1}$, initial  capital is $B_{t-1}$, and demand shortage quantity of previous period $w_{t-1}$. Its actions are production quantity $y_{t}$ and demand realized quantity $v_{t}$. Lower bounds for $y_{t}$ and $v_{t}$ are both 0; Upper bound for $y_{t}$ is the maximum production quantity under present capital and upper bound for  $v_{t}$ is the quantity of effective demand in this period, which are shown by Eq. \eqref{eq:Xupbound} and Eq. \eqref{eq:vupbound}, respectively.
\begin{align}
\overline{y}_{t}=&\max\big\{0,(B_{t-1}-s_{t})/c_{t}\big\},\label{eq:Xupbound}\\
\overline{v}_{t}=&\max\{0,\beta(d_{t}-w_{t-1})\}.\label{eq:vupbound}
\end{align}

Define a unit step function $K(z)$: $K(z)=1$ if $z>0$;  $K(z)=0$ if $z\leq 0$. The state transition equations are:
\begin{align}
I_{t}=&I_{t-1}+y_{t}-v_{t},\label{eq:dyinventoryflow}\\
B_{t}=&
\begin{cases} 
 B_{t-1}+p_{t}v_{t}-h_{t}I_{t}-s_{t}K(y_{t}>0)-c_{t}y_{t},\quad &t\neq T_{L},\\
 B_{t-1}+p_{t}v_{t}-h_{t}I_{t}-s_{t}K(y_{t}>0)-c_{t}y_{t}-B_{L}(1+r)^{T_{L}},\quad &t=T_{L},
 \end{cases}\label{eq:dycapitalflow}\\
 w_{t}=&\max\{0,\beta(d_{t}-w_{t-1})\}-v_{t}.
\end{align}

The functional equation for dynamic programming is:
\begin{equation}
\begin{cases}
f_{t}(I_{t-1},B_{t-1},w_{t-1})=\max\limits_{0\leq y_{t}\leq \overline{y}_{t}, 0\leq v_{t}\leq \overline{v}_{t}}\Big\{B_{t}-B_{t-1}+f_{t+1}(I_{t},B_{t},w_{t})\Big\},\\
f_{T+1}(I_{t},B_{t},w_{t})=0.
\end{cases}\label{eq:recuredynamic}
\end{equation}

when $I_{t-1}$ and $w_{t-1}$ are fixed, if increasing $B_{t-1}$, from Eq. \eqref{eq:Xupbound} and Eq. \eqref{eq:vupbound}, the feasible domain for $v_{t}$ does not change, but the feasible domain for $y_{t}$ stays constant or expands. There always exists actions $y'_{t}$ and $v'_{t}$ that make $f'_{t}(I_{t-1},B'_{t-1},w_{t-1})$ not lower than $f_{t}(I_{t-1},B_{t-1},w_{t-1})$. Therefore, $f_t(I_{t-1},B_{t-1},w_{t-1})$ is nondecreasing with $B_{t-1}$ with fixed $I_{t-1}$ and $w_{t-1}$.
\end{proof}

\begin{lemma} 
For any period $t$ ($t=1,\dots,T$), if goodwill lost rate is 0, when $I_{t-1}$ are fixed, $f_t(I_{t-1},B_{t-1})$ is nondecreasing with $B_{t-1}$. \label{lemma3}
\end{lemma}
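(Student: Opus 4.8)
The plan is to reduce Lemma~\ref{lemma3} to Lemma~\ref{lemma2} by showing that setting $\beta=0$ eliminates the dependence of the value function on the carried-over shortage $w_{t-1}$. First I would return to the effective-demand definition Eq.~\eqref{eq:edt} and substitute $\beta=0$, which yields $Ed_t=\max\{0,d_t\}=d_t$ for every period. The key observation is that $w_{t-1}$ enters the whole formulation only through the product $\beta w_{t-1}$ appearing in $Ed_t$; hence, once $\beta=0$, the effective demand in each period equals the raw demand $d_t$ and is independent of the previous period's shortage.

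Next I would trace this substitution through the dynamic program built in the proof of Lemma~\ref{lemma2}. The realized-demand bound $\overline{v}_t$ equals the effective demand $Ed_t=d_t$, now free of $w_{t-1}$; the state transitions Eq.~\eqref{eq:dyinventoryflow} and Eq.~\eqref{eq:dycapitalflow} reference $y_t$ and $v_t$ but not $w_{t-1}$; and although the shortage $w_t=Ed_t-v_t$ is still generated, it feeds forward only via the $\beta w_t$ term inside $Ed_{t+1}$, which vanishes. Consequently $w_{t-1}$ is inert throughout the recursion Eq.~\eqref{eq:recuredynamic}, so the value function collapses to a function of the two states $(I_{t-1},B_{t-1})$, justifying the reduced notation $f_t(I_{t-1},B_{t-1})$ used in the statement.

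With this decoupling established, the monotonicity is immediate. Lemma~\ref{lemma2} already shows that, holding $I_{t-1}$ and $w_{t-1}$ fixed, $f_t$ is nondecreasing in $B_{t-1}$; since under $\beta=0$ the value no longer varies with $w_{t-1}$, the qualifier on $w_{t-1}$ becomes vacuous and the same monotonicity in $B_{t-1}$ holds with only $I_{t-1}$ fixed. I expect the sole delicate point to be the decoupling step: one must confirm that $w_{t-1}$ does not re-enter the recursion through a secondary channel, in particular through the bound $\overline{v}_t$ or the revenue term $p_t v_t$. Each check is routine once it is noted that every occurrence of $w_{t-1}$ carries the factor $\beta$, so the reduction is clean and Lemma~\ref{lemma3} follows.
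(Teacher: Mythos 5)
Your proposal is correct and follows essentially the same route as the paper: both observe that with $\beta=0$ the shortage $w_{t-1}$ drops out of the dynamic program (every occurrence of $w_{t-1}$ carries a factor $\beta$), so the state reduces to $(I_{t-1},B_{t-1})$ and the monotonicity argument of Lemma~\ref{lemma2} --- the feasible domain of $v_t$ is unchanged while that of $y_t$ can only expand as $B_{t-1}$ grows --- applies verbatim. Your explicit check that $w_{t-1}$ does not re-enter through $\overline{v}_t$ or the revenue term is a slightly more careful writing of the same argument.
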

\begin{proof}
The proof is similar to Lemma \ref{lemma2}. If goodwill shortage rate is 0, demand shortage is not the states of the dynamic programming problem. The actions are still $y_{t}$ and $v_{t}$ and state transition equations are Eq.  \eqref{eq:dyinventoryflow} and Eq. \eqref{eq:dycapitalflow}. The functional equation changes to be:
\begin{equation}
\begin{cases}
f_{t}(I_{t-1},B_{t-1})=\max\limits_{0\leq y_{t}\leq \overline{y}_{t}, 0\leq v_{t}\leq \overline{v}_{t}}\Big\{B_{t}-B_{t-1}+f_{t+1}(I_{t},B_{t})\Big\},\\
f_{T+1}(I_{t},B_{t})=0.
\end{cases}\label{eq:recuredynamic2}
\end{equation}

When $I_{t-1}$ are fixed and increasing $B_{t-1}$, feasible domain for $v_{t}$ keeps unchanged while feasible domain for $y_{t}$ stays constant or expands. Therefore, $f_t(I_{t-1},B_{t-1})$ is nondecreasing with $B_{t-1}$.
\end{proof}

\subsection{Approximated traveling salesman problem}

% Although many lot sizing problem can be transformed to a fixed charge network flow problem like the work \cite{zangwill1969backlogging} and apply its properties, our problem is different in some aspects: production quantities are influenced by capital flow, while capital flow is influenced by many parameters; capital flow in the former periods have impact on the capital flow in latter periods. Therefore, some properties in the fixed charge network flow problem does not fit for our problem.

% However, 

% Based on Lemma 1, Lemma 2 and Lemma 3, we use the following heuristic steps to  convert Models P into a traveling salesman problem. 
% \begin{enumerate}[partopsep=0pt,itemsep=0pt,parsep=0pt]
% \item we apply the zero-inventory ordering policy to divide production plan into several production rounds.
% \item although end-of-period capital $B_{t}$ are influenced by its initial capital and previous demand shortage, we omit the impact of previous demand shortage, and always select maximum initial capital for computation; if initial capital is same, we select the one that gives minimum previous demand shortage. 
% Note that the first one is not a heuristic step if variable production costs are stationary.
% \end{enumerate}

Define $BB_{m,n}^{B_{m-1}}$ as the maximum capital increment in a production round from period $m$ to period $n$ with initial capital $B_{m-1}$. Based on Lemma \ref{lemma1}, Lemma \ref{lemma2} and Lemma \ref{lemma3}, our problem is approximately transformed to a traveling salesman problem finding the longest route as shown in Figure \ref{fig:tsp} (in this case, $T=4$). Main ideas behind this approximation are that we apply the zero-inventory-ordering policy to divide production plan into several production rounds, and apply Lemma \ref{lemma2} and Lemma \ref{lemma3} to always select maximum initial capital for computation of the length of arcs in the traveling salesman problem.

\begin{figure}[h]
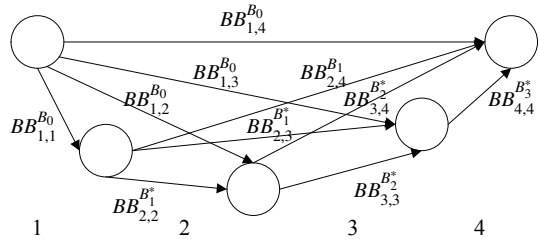
  
\centering\footnotesize
\begin{overpic}[scale=0.7]{tsp.eps}
\put(1,20){$BB_{1,1}^{B_{0}}$}
\put(22,25){$BB_{1,2}^{B_{0}}$}
\put(35,30){$BB_{1,3}^{B_{0}}$}
\put(40,40){$BB_{1,4}^{B_{0}}$}
\put(20,5){$BB_{2,2}^{B_{1}^{\ast}}$}
\put(45,20){$BB_{2,3}^{B_{1}^{\ast}}$}
\put(55,30){$BB_{2,4}^{B_{1}}$}
\put(65,8){$BB_{3,3}^{B_{2}^{\ast}}$}
\put(63,25){$BB_{3,4}^{B_{2}^{\ast}}$}
\put(90,25){$BB_{4,4}^{B_{3}^{\ast}}$}
\end{overpic}
\caption{Approximated traveling salesman problem.}\label{fig:tsp}
\end{figure}

A functional equation is constructed for computation of the problem, and we set $B_0^{\ast}=B_0$.
\begin{equation}
B_{n}^{\ast}=\max\limits_{1\leq m< n}\left[B_{m-1}^{\ast}+BB_{m,n}^{B_{m-1}^{\ast}}\right],\quad n=1,2,\dots T.\label{eq:recurecomputation}
\end{equation}

Apparently, $B_{T}^{\ast}=\max\limits_{1\leq m\leq T}\left[B_{m-1}^{\ast}+BB_{m,T}^{B_{m-1}^{\ast}}\right]$. On this functional equation, we have the following properties.

\begin{lemma}
If unit variable production costs are equal and goodwill loss rate is 0, for any period $t$, given end-of-period inventory of period $t$ is 0, the optimal production plan from period 1 to $t$ is part of the optimal production plan from period 1 to $T$.\label{lemma4}
\end{lemma}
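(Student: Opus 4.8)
The plan is to read Lemma~\ref{lemma4} as a \emph{principle of optimality} (optimal-substructure) statement and to establish it by an exchange argument that replaces the first $t$ periods of a global optimum by any optimal subplan on periods $1,\dots,t$ without decreasing the objective. The structural fact that makes this work is that, once we impose $I_{t}=0$, the only quantity that periods $t+1,\dots,T$ inherit from periods $1,\dots,t$ is the carried-over capital $B_{t}$: no inventory crosses the cut because $I_{t}=0$, and no demand-shortage coupling crosses it because $\beta=0$ forces $Ed_{i}=d_{i}$ for every $i$, so the effective demands in the tail do not depend on any $w_{j}$ with $j\le t$. The equal-variable-cost hypothesis is what lets me treat $I_{t}=0$ as a meaningful boundary in the first place: by Lemma~\ref{lemma1} an optimal plan obeys the zero-inventory-ordering policy, so production rounds meet exactly at zero-inventory epochs, and ``the optimal plan from period~$1$ to period~$t$ with $I_{t}=0$'' is precisely the subproblem whose value is $B_{t}^{\ast}$ in the functional equation~\eqref{eq:recurecomputation}.

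With the cut isolated, I would argue as follows. Let $\Pi^{\ast}$ be a global optimal plan over periods $1,\dots,T$, and let $\widehat{B}_{t}$ be the end-of-period capital produced by its restriction to periods $1,\dots,t$; this restriction is feasible for the $1$--$t$ subproblem and ends with $I_{t}=0$ by hypothesis. Let $\Pi^{\dagger}$ be an optimal $1$--$t$ subplan, that is, one attaining the maximal end-of-period capital $B_{t}^{\ast}\ge\widehat{B}_{t}$ subject to $I_{t}=0$. I would then form the spliced plan that follows $\Pi^{\dagger}$ on periods $1,\dots,t$ and copies the \emph{same} tail actions $(y_{i},v_{i})_{i>t}$ used by $\Pi^{\ast}$ on periods $t+1,\dots,T$; since both $\Pi^{\dagger}$ and the copied tail meet at $I_{t}=0$, the inventory balance~\eqref{eq:inventoryflow} stays consistent across the cut.

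The key step, and the one I expect to be the main obstacle, is verifying that this splice is feasible and weakly improves the objective. Feasibility of the demand side is immediate because $\beta=0$ fixes every $Ed_{i}=d_{i}$ independently of the history, so the copied tail actions remain admissible. For the capital side, identical tail actions leave revenues and costs in each tail period unchanged, so by induction the new end-of-period capital in every period $i>t$ equals the old one shifted up by the constant $\Delta:=B_{t}^{\ast}-\widehat{B}_{t}\ge 0$; the loan repayment $B_{L}(1+r)^{T_{L}}$ at $t=T_{L}$ is a fixed subtraction and does not disturb this uniform shift. Consequently the capital constraint~\eqref{con:capital} and the non-negativity requirement $B_{i}\ge0$ (Assumption~\ref{assume2}) can only be relaxed in the tail, since each $B_{i-1}$ rises by $\Delta\ge0$, so the spliced plan is feasible and its terminal capital exceeds that of $\Pi^{\ast}$ by $\Delta$. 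This monotonicity is exactly what Lemma~\ref{lemma3} packages: because $f_{t+1}(0,\cdot)$ is nondecreasing, the continuation value from the larger capital $B_{t}^{\ast}$ is at least that from $\widehat{B}_{t}$.

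Finally I would close the argument. The spliced plan is feasible with objective at least that of $\Pi^{\ast}$, hence it is itself globally optimal; since its first $t$ periods coincide with $\Pi^{\dagger}$, an optimal $1$--$t$ plan does extend to a global optimum, which is the assertion of the lemma. As a by-product, optimality of $\Pi^{\ast}$ forces $\Delta=0$, so $\widehat{B}_{t}=B_{t}^{\ast}$: the prefix of \emph{every} global optimum already solves the $1$--$t$ subproblem, which is precisely what justifies the forward recursion~\eqref{eq:recurecomputation}.
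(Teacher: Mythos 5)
Your proposal is correct and rests on the same key idea as the paper's proof: with $I_{t}=0$ and $\beta=0$ the only state crossing the cut is the capital $B_{t}$, and the continuation value $f_{t+1}(0,B_{t})$ is nondecreasing in $B_{t}$ (Lemma~\ref{lemma3}), so maximizing $B_{t}$ over the prefix is optimal for the whole horizon. Your explicit splice-and-shift argument is simply a more detailed unrolling of the paper's appeal to Lemma~\ref{lemma3} and the recursion, and it verifies the feasibility claims the paper leaves implicit.
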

\begin{proof}
If goodwill loss rate and end inventory level for period $t$ are both 0, based on Eq. \eqref{eq:recuredynamic2}, given initial capital and initial inventory, maximum capital increment during period $t$ to period $T$ is:
\begin{equation}
f_{t}(I_{t-1},B_{t-1})=B_{T}-B_{t-1}=\max\limits_{0\leq y_{t}\leq \overline{y}_{t}, 0\leq v_{t}\leq \overline{v}_{t}}\Big\{B_{t}-B_{t-1}+f_{t+1}(0,B_{t})\Big\}.\label{eq:recurezeroinventory}
\end{equation}

From Lemma \ref{lemma3}, $f_{t+1}(0,B_{t})$ is a non-decreasing function of $B_t$, and gets the maximum value when $B_{t}$ is the maximum end-of-period capital for period $t$. Given zero end-of-period inventory in period $t$, this happens when the production plan from period 1 to $t$ is optimal because optimal production plan satisfies zero-inventory-ordering policy when the variable production costs are equal. So, what is optimal for period 1 to period $t$ is also optimal for $f_{t+1}(0,B_{t})$, which is the maximum capital increment during period $t+1$ to period $T$.  Therefore, the optimal production plan from period 1 to $t$ is part of the optimal production plan from period 1 to $T$.
\end{proof}

\begin{lemma}
If unit variable production costs are equal and goodwill loss rate is 0, for any two period $t_1$, $t_2$ ($t_1<t_2$), given initial inventory of period $t_1$ is 0, end-of-period inventory of period $t_2$ is 0, the optimal production plan from period $t_1$ to $t_2$ is part of the optimal production plan from period 1 to $T$.\label{lemma5}
\end{lemma}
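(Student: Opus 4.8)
The plan is to prove Lemma \ref{lemma5} by an exchange argument that generalizes the one used for Lemma \ref{lemma4} (which is exactly the special case $t_1=1$, since $I_0=0$ always). First I would fix an optimal plan for the whole horizon and denote by $B_t^{\ast}$ and $I_t^{\ast}$ the capital and inventory it induces, so that the hypotheses read $I_{t_1-1}^{\ast}=0$ and $I_{t_2}^{\ast}=0$. Because the variable costs are equal and the goodwill loss rate is zero, Lemma \ref{lemma1} supplies the zero-inventory-ordering structure and Eq. \eqref{eq:edt} collapses to $Ed_t=d_t$, so no inventory is carried across the two boundary periods and the effective demand in every period is independent of the lost-sales history. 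This lets me split the objective \eqref{eq:objective} additively as
\[
B_T-B_0=(B_{t_1-1}^{\ast}-B_0)+(B_{t_2}^{\ast}-B_{t_1-1}^{\ast})+(B_T^{\ast}-B_{t_2}^{\ast}),
\]
the three terms being the capital increments on the head $[1,t_1-1]$, the round $[t_1,t_2]$, and the tail $[t_2+1,T]$; these three subplans are feasible in isolation and are coupled only through the capital handed forward.

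Next I would isolate the round. Since the entering capital $B_{t_1-1}^{\ast}$ is a fixed number, maximizing the increment of the round is equivalent to maximizing its terminal capital $B_{t_2}$ subject to starting inventory $0$, ending inventory $0$, and starting capital $B_{t_1-1}^{\ast}$ — this is precisely what defines the \emph{optimal production plan from $t_1$ to $t_2$}. So it suffices to show that the restriction of the global optimum to $[t_1,t_2]$ attains this local maximum. Arguing by contradiction, suppose there is a feasible round plan with the same zero boundary inventories and the same entering capital that reaches $B_{t_2}^{\prime}>B_{t_2}^{\ast}$. I would then build a competitor global plan by keeping the head $[1,t_1-1]$ unchanged (so $B_{t_1-1}$ is still $B_{t_1-1}^{\ast}$), inserting this better round, and retaining the original tail actions on $[t_2+1,T]$.

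The crux is verifying that this competitor is feasible and strictly better. Feasibility of the tail transfers because every tail capital is shifted up by the constant $B_{t_2}^{\prime}-B_{t_2}^{\ast}>0$: the capital constraint \eqref{con:capital} is a lower bound $s_tx_t+c_ty_t\le B_{t-1}$ that is only relaxed by a larger $B_{t-1}$, and the nonnegativity $B_t\ge 0$ is preserved for the same reason; the inventory entering the tail is still zero and, since $Ed_t=d_t$ is unaffected by the swap, the tail's demands, revenues and costs — hence its increment $B_T-B_{t_2}$ computed through \eqref{eq:capitalflow} — are unchanged. The new final capital therefore equals $B_T^{\ast}+(B_{t_2}^{\prime}-B_{t_2}^{\ast})>B_T^{\ast}$, contradicting global optimality; if one prefers to re-optimize the tail instead of reusing it, Lemma \ref{lemma3} gives the same conclusion because $f_{t_2+1}(0,B_{t_2})$ is nondecreasing in $B_{t_2}$. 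This forces the global round to be locally optimal, which is the assertion. The main obstacle I anticipate is exactly this feasibility-transfer step: making rigorous that raising the starting capital of the tail never violates \eqref{con:capital} or the recursion \eqref{eq:capitalflow}, and that, thanks to the zero goodwill loss, reusing the identical tail actions reproduces the identical increment — everything else is the additive bookkeeping already seen in Lemma \ref{lemma4}.
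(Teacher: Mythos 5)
Your proposal is correct, and it reaches the conclusion by a route that is noticeably more explicit than the paper's. The paper proves Lemma \ref{lemma5} by invoking Lemma \ref{lemma4} twice: the optimal prefix $[1,t_1-1]$ is part of the global optimum, concatenating that prefix with an optimal round on $[t_1,t_2]$ is asserted to yield the optimal plan on $[1,t_2]$, and Lemma \ref{lemma4} applied at $t_2$ then embeds this into the full horizon. The concatenation step is stated rather than argued. You instead fix the global optimum, telescope the objective into head, round and tail, and run a direct exchange: if the round restricted from the global optimum were not locally optimal for its entering capital $B_{t_1-1}^{\ast}$ and zero boundary inventories, splicing in a better round and reusing the original tail actions would raise every subsequent capital level by a positive constant, preserving feasibility of \eqref{con:capital} and $B_t\geq 0$ and leaving the tail's increment unchanged (since $\beta=0$ makes $Ed_t=d_t$ and the boundary inventory is zero), contradicting optimality. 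This makes rigorous exactly the feasibility-transfer step that the paper delegates to the monotonicity of $f_{t_2+1}(0,B_{t_2})$ in Lemma \ref{lemma3}; both arguments ultimately rest on that same monotonicity, but yours needs neither Lemma \ref{lemma4} nor the unproved concatenation claim. The one small point worth adding is that the lemma's phrase ``the optimal production plan from $t_1$ to $t_2$'' is implicitly relative to the entering capital of the optimal prefix; your argument uses the global optimum's $B_{t_1-1}^{\ast}$, and these coincide precisely by Lemma \ref{lemma4}, so a one-line remark tying them together would fully align your statement with the paper's.
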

\begin{proof}
By lemma \ref{lemma4}, the optimal production plan from period 1 to $t_1-1$ is part of the whole production plan,  and the optimal production plan from period 1 to $t_1-1$ is also part of the optimal production plan from period 1 to $t_2$. If the production plan from period $t_1$ to $t_2$ is optimal, together with the optimal plan form period 1 to $t_1-1$, it will result in an optimal production plan from period $1$ to $t_2$ under the zero-inventory-ordering policy. Since the optimal production plan from period 1 to $t_2$ is part of the whole production plan, the optimal production plan from period $t_1$ to $t_2$ is part of the optimal production plan from period 1 to $T$.
\end{proof}

\begin{theorem}
If unit variable production costs are equal and goodwill lost rate is 0, namely, $c_{t}=c$, $\forall t$, and $\beta=0$, $B_{T}^{\ast}=\max\limits_{1\leq m\leq T}\left[B_{m-1}^{\ast}+BB_{m,T}\right]$ provides an optimal solution.\label{theorem1}
\end{theorem}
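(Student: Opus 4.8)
The plan is to show that the recursion in Eq.~\eqref{eq:recurecomputation} computes, at each index $n$, the largest end-of-period capital $B_n$ attainable by any feasible plan on periods $1,\dots,n$ that terminates with zero inventory, and then to identify the value at $n=T$ with the optimal objective. I would prove the key claim ``$B_n^\ast$ equals the maximum zero-terminal-inventory capital at period $n$'' by strong induction on $n$, with the base case $B_0^\ast=B_0$ holding by definition. The objective of Model~P is $B_T-B_c-B_L$ with $B_c,B_L$ fixed, and positive holding costs make any terminal inventory $I_T>0$ strictly wasteful, so an optimal plan has $I_T=0$; hence maximizing the zero-terminal-inventory capital at $T$ coincides with maximizing the objective, and the claim at $n=T$ yields the theorem.

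For the inductive step I would fix an arbitrary feasible plan on $[1,n]$ with $I_n=0$ and decompose it into production rounds. Under $c_t=c$, Lemma~\ref{lemma1} gives the zero-inventory-ordering property, so every production cycle starts with empty inventory; grouping cycles between consecutive zero-inventory instants partitions $[1,n]$ into production rounds, and the last of these begins at some period $m$ with $I_{m-1}=0$. The capital then splits additively across the boundary at $m-1$: $B_n = B_{m-1} + (\text{increment over the round }[m,n])$. The increment over that round, given starting capital $B_{m-1}$, is by definition at most $BB_{m,n}^{B_{m-1}}$, while $B_{m-1}$ is itself the end capital of a zero-terminal-inventory plan on $[1,m-1]$, so the induction hypothesis gives $B_{m-1}\le B_{m-1}^\ast$.

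The decisive ingredient is monotonicity in capital. Because $\beta=0$, Lemma~\ref{lemma3} says the optimal future value $f$ is nondecreasing in the initial capital, and the same reasoning shows $BB_{m,n}^{B_{m-1}}\le BB_{m,n}^{B_{m-1}^\ast}$ whenever $B_{m-1}\le B_{m-1}^\ast$: more starting cash only enlarges the feasible production set and can never hurt. Combining these, $B_n \le B_{m-1}^\ast + BB_{m,n}^{B_{m-1}^\ast} \le B_n^\ast$, the last step because the recursion maximizes over $m$. For the reverse inequality I would exhibit a feasible plan realizing $B_n^\ast$: take the plan on $[1,m^\ast-1]$ attaining $B_{m^\ast-1}^\ast$ (it exists and is part of a globally optimal plan by Lemmas~\ref{lemma4} and~\ref{lemma5}) and concatenate it with the round on $[m^\ast,n]$ attaining $BB_{m^\ast,n}^{B_{m^\ast-1}^\ast}$, where $m^\ast$ is the maximizing index. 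Because $I_{m^\ast-1}=0$, the inventory- and capital-balance equations splice cleanly at the boundary, so the concatenation is feasible and yields end capital exactly $B_{m^\ast-1}^\ast + BB_{m^\ast,n}^{B_{m^\ast-1}^\ast}=B_n^\ast$, completing the induction.

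I expect the main obstacle to be precisely this splicing, principle-of-optimality argument: one must rule out any ``interaction'' whereby settling for less than the maximum prefix capital $B_{m-1}^\ast$ at a breakpoint could be globally advantageous, and must guarantee that the independently optimized sub-plans remain jointly feasible. Lemmas~\ref{lemma3}, \ref{lemma4}, and~\ref{lemma5} are exactly the tools that dispose of this difficulty---monotonicity forbids any gain from reduced capital, while the ``part of the global optimum'' statements certify that the recursively optimal prefix capital $B_{m-1}^\ast$ is simultaneously achievable and compatible with optimal continuation. The only remaining care is the terminal condition $I_T=0$, which I would justify separately from the holding-cost structure rather than assume.
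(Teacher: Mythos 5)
Your proposal is correct and follows essentially the same route as the paper's own proof: decompose the optimal plan into production rounds via the zero-inventory-ordering property of Lemma~\ref{lemma1}, use the capital-monotonicity of Lemma~\ref{lemma3} to justify always carrying the maximal prefix capital $B_{m-1}^{\ast}$, and invoke Lemmas~\ref{lemma4} and~\ref{lemma5} for the principle of optimality across round boundaries. Your version is simply a more explicit rendering (a formal induction on $n$, the two-sided inequality, and the separate justification of $I_T=0$) of the argument the paper sketches in a single paragraph.
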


\begin{proof}
When the variable production costs are equal, Lemma \ref{lemma1} shows the problem satisfies the zero-inventory ordering policy. Hence, the optimal production plan can be a combination of several production rounds. The functional equation \eqref{eq:recurecomputation} in fact enumerates all the possible production rounds.  Lemma \ref{lemma4} and Lemma \ref{lemma5} indicate that the optimal production plan in a given production round is part of total optimal production plan. These are the same properties as the Wagner-Whitin case \citep{wagner1958dynamic1}. Therefore, for all the combinations of production rounds in the computation of $B_{T}^{\ast}=\max\limits_{1\leq m\leq T}\left[B_{m-1}^{\ast}+BB_{m,T}\right]$, the one that gives maximum final capital is the optimal solution. 
\end{proof}

When the variable production costs are not all equal, or goodwill loss rate is not 0, the functional equation in Theorem \ref{theorem1} only gets an approximate solution. However, based on some properties below, we devise some heuristic adjustments to make it close to the optimal solution.

\subsection{Properties for production plan adjustment}

\begin{corollary}
In a feasible solution, assume the solution is $x_{t}$, $y_{t}$, $w_{t}$ ($t=1,2,\dots ,T$). For any two consecutive production cycles, assume the former production cycle begins at period $t_{1}$, ends at $t_{2}-1$; the latter production cycle begins at period $t_{2}$, ends at period $t_{3}$, with end-of-period inventory at period $t_{3}$ is 0, end-of-period capital at period $t_{3}$ is $B_{t_{3}}$ and demand shortage at period $t_{3}$ is $w_{t_{3}}$. We can make a production plan adjustment from $t^{\prime}$ ($t_{1}+1\leq t^{\prime}\leq t_{2}-1$) to $t_{2}-1$ if the  capital $B_{t_{3}}^{\prime}\geq B_{t_{3}}$, shortage quantity $w_{t_{3}}^{\prime}\leq w_{t_{3}}$ and still with 0 end-of-period inventory in $t_{3}$ after this adjustment.\label{corollaryAdjust}
\end{corollary}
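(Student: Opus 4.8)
The plan is to reduce this global claim to a one-step comparison of the dynamic-programming value function $f$ introduced for Lemma~\ref{lemma2}, evaluated at the handoff period $t_{3}$. Because the adjustment is confined to periods $t^{\prime}$ through $t_{2}-1$ and, by hypothesis, leaves the end-of-period inventory at $t_{3}$ equal to zero in both the original and the adjusted plans, everything from $t_{3}+1$ onward sees the original plan only through the state $(I_{t_3},B_{t_3},w_{t_3})=(0,B_{t_3},w_{t_3})$ and the adjusted plan only through $(0,B^{\prime}_{t_3},w^{\prime}_{t_3})$. Writing the final capital under an optimal continuation as $B_{T}=B_{t_3}+f_{t_3+1}(0,B_{t_3},w_{t_3})$, the corollary amounts to showing $B^{\prime}_{t_3}+f_{t_3+1}(0,B^{\prime}_{t_3},w^{\prime}_{t_3})\ge B_{t_3}+f_{t_3+1}(0,B_{t_3},w_{t_3})$; that is, raising the handoff capital and lowering the handoff shortage cannot decrease the best achievable continuation.

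First I would dispatch the capital coordinate: with $I_{t_3}=0$ and the shortage argument held at $w^{\prime}_{t_3}$, Lemma~\ref{lemma2} gives immediately $f_{t_3+1}(0,B^{\prime}_{t_3},w^{\prime}_{t_3})\ge f_{t_3+1}(0,B_{t_3},w^{\prime}_{t_3})$ since $B^{\prime}_{t_3}\ge B_{t_3}$ (and Lemma~\ref{lemma3} covers the $\beta=0$ case). Next I would establish the monotonicity that is missing from the stated lemmas, namely that for fixed initial inventory and capital $f_{t_3+1}$ is \emph{nonincreasing} in its shortage argument, which yields $f_{t_3+1}(0,B_{t_3},w^{\prime}_{t_3})\ge f_{t_3+1}(0,B_{t_3},w_{t_3})$ because $w^{\prime}_{t_3}\le w_{t_3}$. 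Chaining these two inequalities and adding $B^{\prime}_{t_3}\ge B_{t_3}$ shows the objective does not decrease; feasibility of the tail $t_{3}+1,\dots,T$ is inherited for free, since the continuation is taken to be any optimal, hence feasible, policy from the improved state in \eqref{eq:recuredynamic}.

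The shortage monotonicity is the step I expect to be the main obstacle, and I would prove it by backward induction mirroring the proof of Lemma~\ref{lemma2}. Lowering $w_{t-1}$ raises the cap $\overline{v}_{t}$ on realized demand in Eq.~\eqref{eq:vupbound}, so the feasible action set only grows; the subtlety is that simply repeating the old action now leaves \emph{more} demand unmet and hence a \emph{larger} outgoing shortage, which would shrink future effective demand and push the comparison the wrong way. The fix is to spend, rather than waste, the extra effective demand $\Delta=\overline{v}^{\prime}_{t}-\overline{v}_{t}\ge 0$: realize $v^{\prime}_{t}=v_{t}+\Delta$, serving $\Delta$ first from on-hand inventory and, if that is exhausted, by producing the remainder. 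This keeps the outgoing inventory and the outgoing shortage equal to their original values, while the one-step capital increment shifts by $(p_{t}-c_{t})\Delta\ge 0$ under the natural assumption that price is at least variable cost; the successor state is then weakly better in capital and identical in shortage, so the induction hypothesis together with Lemma~\ref{lemma2} closes the step. The delicate point to verify is that the additional production of $\Delta$ still respects the capital constraint~\eqref{con:capital}: this is where I would lean on serving as much of $\Delta$ as possible from inventory before producing, on the added revenue of the newly realized units, and, in the boundary regime where capital binds, on realizing only as much of $\Delta$ as the present capital permits, which already suffices to keep the value nondecreasing.
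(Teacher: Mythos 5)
Your overall route is the same as the paper's: both reduce the corollary to a comparison of the dynamic-programming value $f_{t_3+1}$ at the two handoff states $(0,B_{t_3},w_{t_3})$ and $(0,B'_{t_3},w'_{t_3})$, and both dispatch the capital coordinate by Lemma~\ref{lemma2} (Lemma~\ref{lemma3} when $\beta=0$). Where you diverge is on the shortage coordinate. The paper's own proof simply observes that decreasing $w_{t_3}$ enlarges the action domains via Eq.~\eqref{eq:Xupbound} and Eq.~\eqref{eq:vupbound} and concludes immediately; you correctly point out that this is not sufficient, because replaying the old action under the smaller incoming shortage yields a larger outgoing shortage $w_{t_3+1}$, so the maximand itself changes and domain expansion alone does not give monotonicity. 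Isolating the missing lemma --- that $f$ is nonincreasing in its shortage argument at fixed inventory and capital --- is the right move and is a sharper diagnosis than the paper offers.

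However, your proof of that lemma does not close, and the difficulties you flag are real rather than cosmetic. To keep both the outgoing inventory and the outgoing shortage at their original values you must realize the extra effective demand $\Delta$ \emph{and} produce $\Delta$ additional units; serving $\Delta$ from on-hand inventory instead keeps $w_t$ fixed but lowers $I_t$, and no monotonicity of $f$ in the inventory coordinate has been established anywhere. Producing the extra $\Delta$ can violate the capital constraint \eqref{con:capital}, can incur a setup cost $s_t$ in a period where the original plan had $y_t=0$, and the sign of the claimed one-step gain $(p_t-c_t)\Delta$ requires $p_t\ge c_t$, an assumption the model never makes. Your fallback of realizing only as much of $\Delta$ as capital permits leaves a residual increase in the outgoing shortage that must then be controlled by the very monotonicity being proved, so the backward induction does not obviously terminate. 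In short: same skeleton as the paper, a more honest identification of the weak step, but the patch is incomplete --- which leaves you no worse off than the published proof, since the paper asserts the conclusion of this step without argument.
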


\begin{proof}
The feasible solution is presented by Figure \ref{fig:corollary1-1}. For period $t_{3}+1$, its functional equation is
\begin{equation}
f_{t_{3}+1}(I_{t_{3}},B_{t_{3}},w_{t_{3}})=\max\limits_{0\leq y_{t_{3}+1}\leq \overline{y}_{t_{3}+1}, 0\leq v_{t_{3}+1}\leq \overline{v}_{t_{3}+1}}\Big\{B_{t_{3}+1}-B_{t_{3}}+f_{t_{3}+2}(I_{t_{3}+1},B_{t_{3}+1},w_{t_{3}+1})\Big\}.\\
\end{equation}

From Lemma \ref{lemma2}, $f_{t_{3}+1}(I_{t_{3}},B_{t_{3}},w_{t_{3}})$ is non decreasing with $B_{t_{3}}$ when $I_{t_{3}}$ and $w_{t_{3}}$ are fixed. In the adjustment, $I_{t_{3}}$ are fixed to be 0, after increasing $B_{t_{3}}$ and decreasing $w_{t_{3}}$, the domains for $y_{t_{3}+1}$ and $v_{t_{3}+1}$ both expand from Eq. \eqref{eq:Xupbound} and Eq. \eqref{eq:vupbound}. Therefore, the final capital increment is non decreasing with this adjustment. 
\end{proof}

The adjustment in Corollary \ref{corollaryAdjust} is shown by Figure \ref{fig:corollary1-2}.

\begin{figure}[ht]
\centering\includegraphics[scale=0.8]{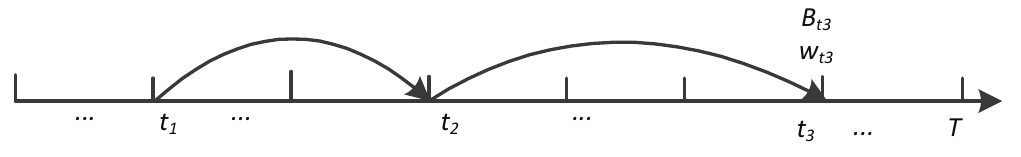}
\caption{A feasible solution.}\label{fig:corollary1-1}
\centering\includegraphics[scale=0.8]{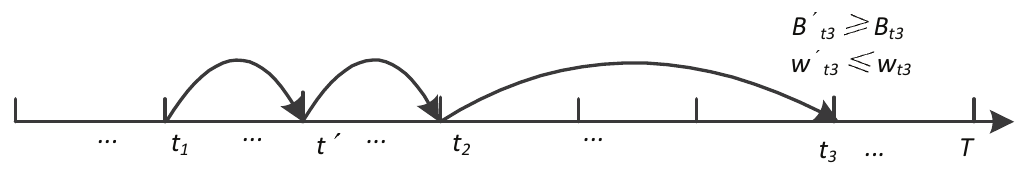}
\caption{Heuristic adjustment based on Corollary 1.}\label{fig:corollary1-2}
\end{figure}

\begin{corollary}
In a feasible solution, assume that the solution is $x_{t}$, $y_{t}$, $w_{t}$ ($t=1,2,\dots ,T$). For any two consecutive production cycles, assume that the former production cycle begins at period $t_{1}$, ends at period $t_{2}-1$, and the latter production cycle begins at period $t_{2}$. If $B_{t_{1}-1}-s_{t_{1}}-c_{t_{1}}y_{t_{1}}>0$, $c_{t_{1}}+\sum_{i=t_{1}}^{t_{2}-1}h_{i}<c_{t_{2}}$, then it is better to move some production amount $\Delta y_{t_{2}}$ from $y_{t_{2}}$ to $y_{t_{1}}$ to obtain more final capital.\label{corollaryInventory}
\end{corollary}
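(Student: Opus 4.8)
The plan is to exhibit an explicit local perturbation of the given feasible solution that keeps every realized demand $v_{t}=Ed_{t}-w_{t}$ fixed (hence every $Ed_{t}$ and every $w_{t}$, and therefore all revenues and all goodwill effects, unchanged) and merely shifts a small positive amount $\Delta$ of production backward from $t_{2}$ to $t_{1}$. Concretely, I would set $y_{t_{1}}\to y_{t_{1}}+\Delta$ and $y_{t_{2}}\to y_{t_{2}}-\Delta$, leaving all other production quantities, all binary variables, and all realized demands untouched. Because only the \emph{timing} of production changes, the inventory balance \eqref{eq:inventoryflow} forces $I_{t}$ to rise by exactly $\Delta$ for $t_{1}\le t\le t_{2}-1$ and to stay fixed for every other period; in particular $I_{t_{2}}$ and all later inventories are unchanged, so the latter production cycle and everything after it retain their structure.

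The key quantitative step is to account for the change in the objective \eqref{eq:objective}. Since all $Ed_{t}$ and $w_{t}$ are held fixed, the revenue term is unchanged, so the change in $B_{T}$ equals the negative of the change in total cost. The only costs affected are the variable production cost at $t_{1}$ (up by $c_{t_{1}}\Delta$), the variable production cost at $t_{2}$ (down by $c_{t_{2}}\Delta$), and the holding cost on the extra inventory carried through $t_{1},\dots,t_{2}-1$ (up by $\big(\sum_{i=t_{1}}^{t_{2}-1}h_{i}\big)\Delta$). Hence $B_{T}$ changes by $\big(c_{t_{2}}-c_{t_{1}}-\sum_{i=t_{1}}^{t_{2}-1}h_{i}\big)\Delta$, which is strictly positive by the hypothesis $c_{t_{1}}+\sum_{i=t_{1}}^{t_{2}-1}h_{i}<c_{t_{2}}$. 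Equivalently, I would propagate the perturbation through the capital-flow recursion \eqref{eq:capitalflow}: one gets $\Delta B_{t}=-\big(c_{t_{1}}+\sum_{i=t_{1}}^{t}h_{i}\big)\Delta$ for $t_{1}\le t\le t_{2}-1$, the reduced production at $t_{2}$ then adds back $c_{t_{2}}\Delta$ (the fixed loan term cancels in the difference), and for every $t\ge t_{2}$ the increment $\Delta B_{t}=\big(c_{t_{2}}-c_{t_{1}}-\sum_{i=t_{1}}^{t_{2}-1}h_{i}\big)\Delta>0$ is carried unchanged all the way to $B_{T}$.

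It then remains to choose $\Delta>0$ so that feasibility is retained. I would check each constraint in turn: the capital constraint \eqref{con:capital} at $t_{1}$ permits the increase precisely because $B_{t_{1}-1}-s_{t_{1}}-c_{t_{1}}y_{t_{1}}>0$, which gives the headroom $\Delta\le (B_{t_{1}-1}-s_{t_{1}}-c_{t_{1}}y_{t_{1}})/c_{t_{1}}$; non-negativity of $y_{t_{2}}$ needs $\Delta\le y_{t_{2}}$, positive since $t_{2}$ opens a production cycle; and the capital constraint at $t_{2}$ is in fact \emph{relaxed}, since its left-hand side falls by $c_{t_{2}}\Delta$ while its right-hand side $B_{t_{2}-1}$ falls only by $\big(c_{t_{1}}+\sum_{i=t_{1}}^{t_{2}-1}h_{i}\big)\Delta<c_{t_{2}}\Delta$.

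The hard part, and the only genuinely delicate point, is the intermediate end-of-period capitals: the perturbation lowers $B_{t}$ for $t_{1}\le t\le t_{2}-1$, so Assumption \ref{assume2} ($B_{t}\ge0$) must be re-verified. Since each such $B_{t}$ decreases continuously and linearly in $\Delta$, any sufficiently small $\Delta>0$ preserves $B_{t}\ge0$ whenever these capitals were strictly positive originally — which is exactly the regime in which a beneficial move is available, and is the structural counterpart of the slack hypothesis at $t_{1}$. Finally, since the perturbed state at the end of the latter cycle has strictly higher end-of-period capital, unchanged (zero) inventory and unchanged shortage, this is consistent with Corollary \ref{corollaryAdjust} and the monotonicity of Lemma \ref{lemma2}, confirming that the continuation to $T$ is not worsened; combined with the strict gain computed above, moving some positive $\Delta y_{t_{2}}$ yields more final capital, which is the claim.
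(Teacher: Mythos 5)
Your proposal is correct and follows essentially the same route as the paper: shift a small amount $\Delta$ of production from $t_{2}$ back to $t_{1}$, hold all realized demands fixed, and compute the net change $B_{T}'-B_{T}=\bigl(c_{t_{2}}-c_{t_{1}}-\sum_{i=t_{1}}^{t_{2}-1}h_{i}\bigr)\Delta>0$, which is exactly the paper's displayed identity. You are in fact more careful than the paper on feasibility --- in particular you flag that the intermediate capitals $B_{t}$ for $t_{1}\le t\le t_{2}-1$ \emph{decrease} under the shift, so Assumption \ref{assume2} must be re-verified and only a sufficiently small $\Delta$ is guaranteed to work; the paper simply asserts feasibility is unaffected and even prescribes the maximal shift \eqref{eq:movingamount} without checking those intermediate nonnegativity constraints.
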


\begin{proof}
This heuristic step is shown in Figure \ref{fig:movingamount}. If $B_{t_{1}-1}-s_{t_{1}}-c_{t_{1}}y_{t_{1}}>0$, production cycle $t_{1}$ has residual production capacity, which could produce more. After the moving adjustment, the final  capital changes to the following:
\begin{align}
B_{T}^{'}=B_{T}+\left(c_{t_{2}}-c_{t_{1}}-\sum\nolimits_{i=t_{1}}^{t_{2}-1}h_{i}\right)\Delta y_{t_{2}}.
\end{align}

If $c_{t_{1}}+\sum_{i=t_{1}}^{t_{2}-1}h_{i}<c_{t_{2}}$, this adjustment does not affect the feasibility of the solution and $B_{T}^{'}>B_{T}$. Therefore, final capital increases.
\end{proof}

The moving production amount $\Delta y_{t_{2}}$ can be obtained by Eq. \eqref{eq:movingamount}.

\begin{equation}
\Delta y_{t_{2}}=\frac{{B_{t_{1}-1}-s_{t_{1}}}}{{c_{t_{1}}}}-y_{t_{1}}.\label{eq:movingamount}
\end{equation}
%\min\left\{\right\},\frac{B_{t_{2}-1}-s_{t_{2}}-c_{t_{2}}y_{t_{2}}}{c_{t_{2}}-c_{t_{1}}-\sum_{i=t_{1}}^{t_{2}-1}h_{i}}

Eq. \eqref{eq:movingamount} is the maximum production quantity increment that cycle $t_{1}$ can provide. 

\begin{figure}[h]
\centering\includegraphics[scale=0.8]{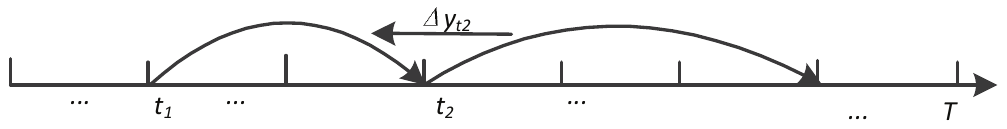}
\caption{Heuristic adjustment based on Corollary \ref{corollaryInventory}.}\label{fig:movingamount}
\end{figure}

\section{Sub-problems and algorithm for our problem}

For the computation of $BB_{m,n}$ in the recursive equation \eqref{eq:recurecomputation}, we remove the integers of Model P and divide it into sub-linear problems. In the sub-linear variables, only realized demand $v_t$, $\forall t$, are decision variables. We also devise some heuristic techniques to adjust the production plan.

\subsection{sub-linear problems}
\label{sec3.1}

By definition, $BB_{m,n}$ is the maximum  capital increment in a production round. Therefore, it may include several production cycles. For a production round with fixed $k$ production cycles, assume the production launching periods are $t_{1}$, $t_{2}$, $\dots$, $t_{k}$ (for convenience of expression, we set $m=t_{1}$, $n=t_{k+1}-1$), the production plan is shown in Figure \ref{fig:BBmn}.

\begin{figure}[!ht]  
\centering\includegraphics{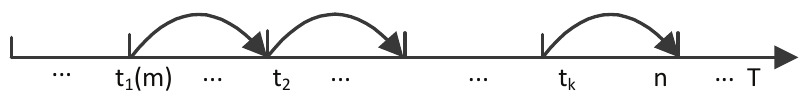}
\caption{Meaning of $BB_{m,n}$.}\label{fig:BBmn}
\end{figure}

To compute $BB_{m,n}$, there are still integer variables $\delta_{t}$, which is a 0-1 variable indicating whether previous goodwill loss is below effective demand. In Model P-sub1 below, we use a heuristic step by assuming $\delta_t=1$, $m\leq t\leq n$, namely, all demand from period $m$ to period $n$ are above previous goodwill loss. With known $w_{t_{1}-1}$, we convert Model P to a sub linear problems P-sub1.
\\

\textbf{Model P-sub1}
\begin{alignat}{2}
 &\max\quad && BB_{m,n} =\max\{B_{n}-B_{m-1}\}\label{eq:BBmn} \\
&s.t.      && \text{for }t=m,m+1,\dots ,n  \nonumber\\
&                 && c_{t_{i}}\sum_{j=t_{i}}^{t_{j+1}-1}v_{j}+s_{t_{i}}\leq B_{t_{i}-1},\quad i=1,2,\dots,k,\label{con1:setupcapital}\\
&                 &&B_{t}\geq 0,\label{con1:Babove0}\\
&                 &&I_{t}=\sum_{j=t+1}^{t_{i+1}-1}v_{j},\quad t_i<t<t_{i+1},i=1,2,\dots,k,\label{con1:inventoryflow}\\
&                 &&B_{t}=
\begin{cases}
B_{t-1}+p_{t}v_{t}-(h_{t}I_{t}+s_{t}+c_{t}\sum_{j=t_{i}}^{t_{i+1}-1}v_{j}),\quad & t\neq T_{L},t=t_{i},i=1,2,\dots,k,\\
B_{t-1}+p_{t}v_{t}-h_{t}I_{t},\quad & t\neq T_{L},t\neq t_{i},i=1,2,\dots,k,\\
\end{cases}\label{con1:capitalflow1}\\
&                 &&B_{t}=
\begin{cases}
B_{t-1}+p_{t}v_{t}-(h_{t}I_{t}+s_{t}+c_{t}\sum_{j=t_{i}}^{t_{i+1}-1}v_{j})-T_{L}(1+L)^{L},\quad & t=T_{L},t=t_{i},i=1,2,\dots,k,\\
B_{t-1}+p_{t}v_{t}-h_{t}I_{t}-T_{L}(1+L)^{L},\quad & t= T_{L},t\neq t_{i},i=1,2,\dots,k,\\
\end{cases}\label{con1:capitalflow2}
\\
&                 &&I_{t_{1}-1}=0, I_{t_{2}-1}=0, \dots, I_{t_{k}-1}=0, I_{n}=0,\label{con1:zeroinventory}\\
&                 && Ed_{t}=
\max\{0,d_{t}-\beta w_{t-1}\},\quad t=t_{1},\label{con1:ed0}\\
& &&
 Ed_{t}=d_{t}-\beta (Ed_{t-1}-v_{t-1}),\quad t\neq t_{1},\label{con1:edflow}\\
&                 &&0\leq v_{t}\leq Ed_{t}.\label{con1:uplowbounds}
\end{alignat}

The objective function \eqref{eq:BBmn} is to maximize capital increment from a production round starting at period $m$ and ending at period $n$. Constraints \eqref{con1:setupcapital} and \eqref{con1:Babove0} represents our paper's assumptions 1 and 2 about capital flow constraints. Constraint \eqref{con1:inventoryflow} shows the relationship between $I_{t}$ and $v_{t}$. Constraint \eqref{con1:capitalflow1} and Constraint \eqref{con1:capitalflow2} are the capital flow balance. Constraint \eqref{con1:zeroinventory} means that initial inventory and end-of-period inventory of each production cycle are both zeros, which is a heuristic step if unit variable production costs are not equal. Constraints \eqref{con1:ed0} and \eqref{con1:edflow} are expressions of effective demands. Constraint \eqref{con1:edflow} also reflects the heuristic assumption:  $\delta_{t}=1$, $m\leq t\leq n$.
Constraint \eqref{con1:uplowbounds} provides the lower and upper bounds of variables $v_t$, which is realized demand in period $t$.

If Model P-sub1 does not obtain a feasible solution, this may be related with the heuristic assumption of $\delta_{t}$. So next step we relax this assumption without loss of goodwill: removing Constraint \eqref{con1:edflow} in Model P-sub1, amending  Constraint \eqref{con1:uplowbounds} to get another sub linear problem P-sub2 below.
\\

\textbf{Model P-sub2}
\begin{alignat}{2}
 &\max\quad && BB_{m,n} =\max\{B_{n}-B_{m-1}\}\tag{\ref{eq:BBmn}}\\
&s.t.      && \text{for }t=m,m+1,\dots ,n  \nonumber\\
&                 &&\eqref{con1:setupcapital}\text{--}\eqref{con1:ed0}\\ 
&                 &&0\leq v_{t}\leq d_{t}.
\end{alignat}
 
Based on the solution of Model P-sub2, compute $\delta_{t}$ according to equations \eqref{eq:wtvt} and \eqref{eq:deltat} below.
\begin{align}
w_{t}&=Ed_{t}-v_{t},\quad t=m,m+1\dots,n,\label{eq:wtvt}\\
\delta_{t}&=
\begin{cases}
0,\quad & \text{if } d_{t}-\beta w_{t-1}<0,t=m,m+1\dots,n, \\
1,\quad & \text{if } d_{t}-\beta w_{t-1}\geq 0, t=m,m+1\dots,n.
\end{cases}\label{eq:deltat}
\end{align}

Based on the values of $\delta_{t}$, another sub linear problem is formulated.
\\

\textbf{Model P-sub3}
\begin{alignat}{2}
 &\max\quad && BB_{m,n} =\max\{B_{n}-B_{m-1}\}\tag{\ref{eq:BBmn}}\\
&s.t.  && \text{for }t=m,m+1,\dots ,n  \nonumber\\
&                 &&\eqref{con1:setupcapital}\text{--}\eqref{con1:ed0}\\ 
&              && Ed_{t}=\begin{cases}
d_{t}-\beta (Ed_{t-1}-v_{t-1}),\qquad &\text{if }\delta_{t}=1,\\
0,\qquad &\text{if }\delta_{t}=0,
\end{cases}\\
&              && d_{t}-\beta (Ed_{t-1}-v_{t-1})<0,\qquad \text{if }\delta_{t}=0.
\end{alignat}

% During numerical experiments, we find sometimes it is better to try to satisfy the demands in the earlier periods and have less periods that has zero effective demands. Let $t_{1}$ be the first period index that has zero effective demand in $\delta_{t}$. Therefore, another set $\delta^{'}_{t}$ is computed below:
% \begin{equation}
% \delta^{'}_{t}=
% \begin{cases}
% \delta_{t}\quad & t\neq t_{1}, t=m,m+1,\dots ,n\\
% 1\quad & t=t_{1}, t=m,m+1,\dots ,n
% \end{cases}
% \end{equation}

% Another sub-linear problem Model P-sub4 is computed, its result is  compared with that of Model P-sub3, and we select the one which gives maximum $BB_{m,n}$. 
% \\

% \textbf{Model P-sub4}
% \begin{alignat}{2}
%  &\text{Max}\quad && BB_{m,n} =\max\{B_{n}-B_{m-1}\}\tag{\ref{eq:BBmn}}\\
% &\text{s.t.}   && \text{for }t=m,m+1,\dots ,n  \nonumber\\
% &                 &&\eqref{con1:setupcapital}\text{--}\eqref{con1:ed0}\\ 
% &              && Ed_{t}=\begin{cases}
% d_{t}-\beta (Ed_{t-1}-v_{t-1})\qquad &\text{if }\delta^{'}_{t}=1\\
% 0\qquad &\text{if }\delta^{'}_{t}=0
% \end{cases}\\
% &              && d_{t}-\beta (Ed_{t-1}-v_{t-1})<0\qquad \text{if }\delta^{'}_{t}=0
% \end{alignat}

If Model P-sub1, Model P-sub2 and Model P-sub3 all do not have a feasible solution, we deem $BB_{m,n}$ does not have a feasible solution and set $v_{t}=0$ ($t=m,m+1,\dots,n$). The relation of $v_{t}$ with $w_{t}$ is provided by Eq. \eqref{eq:wtvt}. The relation of $v_{t}$ with $y_{t}$ is given by Eq. \eqref{relation:Xv}.
\begin{equation}
y_{t}=\begin{cases}
\sum\nolimits_{j=t_{i}}^{t_{i+1}-1}v_{j},\quad &t=t_{i},~i=1,2,\dots,k,\\
0, &t\neq t_{i},~i=1,2,\dots,k.
\end{cases}\label{relation:Xv}
\end{equation}

\subsection{Heuristic techniques in recursion and adjustments}
It is time consuming and complex to enumerate all the possible production cycles in a production round. Therefore, when customer goodwill loss rate is zero, we use one production cycle in a production round; when goodwill loss rate is not zero, we use at most two production cycles in a production round for computation.  For a certain period $t+1$ and given production plan from period 1 to period $t$, two situations are considered in computation of capital increment during a production round when goodwill loss rate is not zero: if there exists no production cycle before $t+1$, we compute only one production cycle beginning with period $t+1$ as a production round; if there exists production cycles before $t+1$, we view the nearest previous production cycle and the production cycle beginning with $t+1$ together as a production round, and make computations.

After computation of capital increments during production rounds, we can get an approximated production plan from period 1 to any period $n$ ($1\leq n\leq T$). Based on Corollary \ref{corollaryAdjust}, we make heuristic adjustments to this production plan. Three situations are considered for this adjustment, which are shown by Figure \ref{fig:adjust123}. In Figure \ref{fig:adjust123}, period  $n$'s production cycle beginning at period $t+1$.
\begin{figure}[!ht]
\centering
\subfigure[Heuristic adjustment 1.]{\label{adjust1}
\includegraphics[scale=0.8]{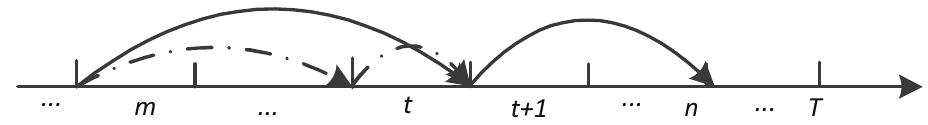}}

\subfigure[Heuristic adjustment 2.]{\label{adjust2}
\includegraphics[scale=0.8]{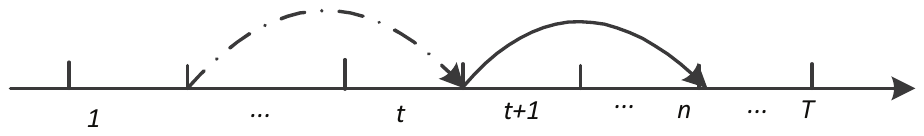}}~~~~  
\subfigure[Heuristic adjustment 3.]{\label{adjust3}
\includegraphics[scale=0.8]{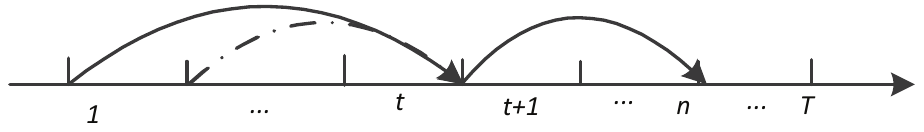}}
\caption{Heuristic adjustments for a known production plan from period 1 to $n$.}\label{fig:adjust123}
\end{figure}

\begin{itemize}
\item Figure \ref{adjust1} means we make adjustments to the first production cycle in the production round $m$ to $n$: dividing the first production cycle into two cycles by enumerating all new production launching periods between period $m$ to period $t$; recomputing $BB_{m,n}$, and selecting the one which gives maximum capital increment.

\item Figure \ref{adjust2} means sometimes it's better to launch a new production cycle before period $t+1$ when there exists no production cycles before it: enumerating all new production launching period between period $1$ to period $t$ as $m$; recomputing $BB_{m,n}$ and selecting the optimal one.

\item Figure \ref{adjust3} means sometimes it's better to launch production later if the first production cycle include period 1: enumerating all new production launching period between period $1$ to period $t$ as $m$;  recomputing $BB_{m,n}$, and selecting the optimal one that can give maximum capital increment.
\end{itemize}

In computing $BB_{m,n}$ for the three heuristic adjustments, a new linear constraint is added to the sub-linear problems: $w_{n}'\leq w_{n}$, which means demand shortage at period $n$ after the adjustment should be not higher than its original value before the adjustment.

If goodwill loss rate is zero, Corollary \ref{corollaryAdjust} is not necessary and there is no need for the adjustments above. After the recursion of $BB_{m,n}$ till the final period $T$, a production plan for the the whole planning horizon is obtained. Backward from period $T$ to period 1, check if it satisfies the criteria of Corollary \ref{corollaryInventory} and make production adjustments. 

\subsection{Computation Algorithm}

Based on functional equation \eqref{eq:recurecomputation}, sub-linear problems and heuristic techniques, we propose a forward recursive algorithm with heuristic adjustment algorithm (FRH) to solve Model P.
\\

\small\textbf{Algorithm FRH for Model P}
\vspace{3mm}

\small\underline{\emph{initialization:}}~~$t=1$, $m=1$, $1\times T$ zero matrices $x$, $y$, $B^{\ast}$, $T\times T$ zero matrices $BB$.
\setlength{\parskip}{5 pt}

\small\underline{\emph{Step 1:}}~~For $n=t, t+1, \dots, T$, select the production round beginning at $m$ and end at $n$, compute $BB_{m,n}$ and record its value in $BB(t,n)$.  

\small\underline{\emph{Step 2:}}~~Compute $B_{t}^{\ast}$ according to Eq. \eqref{eq:recurecomputation}, and obtain the production plan from period 1 to period $t$: $x (1:t)$, $y (1:t)$, $w (1:t)$.  

\small\underline{\emph{Step 3:}}~~Check if the present production plan from period 1 to $t$ meets the three adjustment situations shown by Figure \ref{fig:adjust123}. If meeting the adjustment criteria, make adjustments and update $x (1:t)$, $y (1:t)$, $w (1:t)$, $BB(t,n:T)$.
 
\small\underline{\emph{Step 4:}}~~$t=t+1$, update $m$ and repeat Step 1 $\sim$ Step 3 until $t=T$.

\small\underline{\emph{Step 5:}}~~Check if production plan meets Corollary \ref{corollaryInventory}, if meeting the criteria, make plan adjustments. Obtain final production plan: $x (1:T)$, $y (1:T)$, $w (1:T)$ and final capital $B^{\ast}_{T}$.
\normalsize\setlength{\parskip}{6 pt}

Flow char of our algorithm is shown by Figure \ref{fig:flowchart}.
\begin{figure}[h]
\centering\includegraphics[height=10cm]{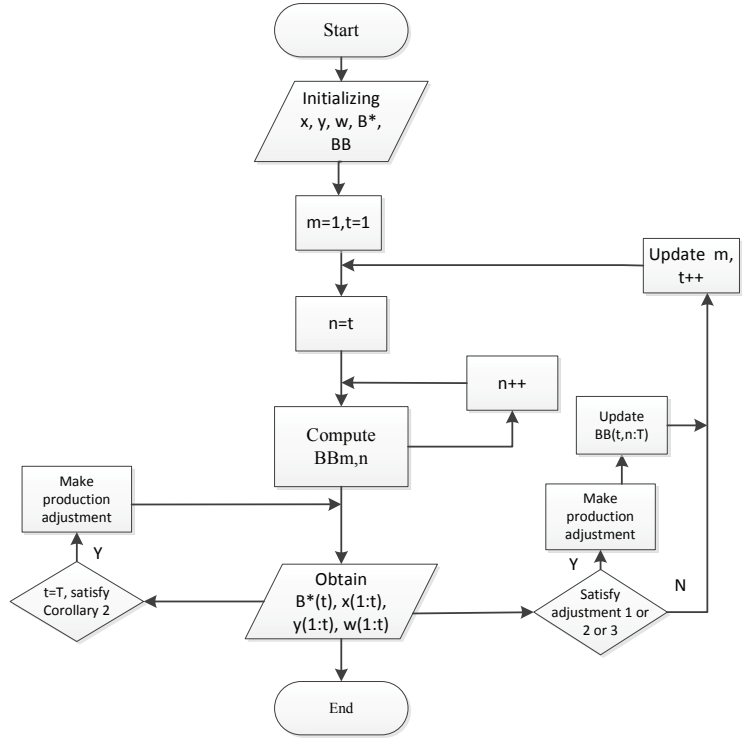}
\caption{Flow chart of our algorithm.}\label{fig:flowchart}
\end{figure}
\normalsize\setlength{\parskip}{0 pt}

\subsection{Computation complexity of our algorithm}
During recursion and heuristic adjustments, when customer goodwill loss rate is zero, there are $T(T+1)/2$ computations of $BB_{m,n}$ and $BB_{m,n}$ includes one sub-linear problem; when customer goodwill loss rate is not zero, in the best case, it is same as the situation when goodwill loss rate is zero: there are no heuristic adjustments meeting Corollary \ref{corollaryAdjust}, and computation of $BB_{m,n}$ requires only one sub-linear problem. In the worst case, there are $3T(T+1)/2$ computation of $BB_{m,n}$ in total: $T(T+1)/2$ computations of $BB_{m,n}$ at first, at most $T(T+1)/2$ computations of $BB_{m,n}$ for the heuristic adjustments, and $T(T+1)/2$ computations of $BB_{m,n}$ to update $BB(t,n:T)$; each computation of $BB_{m,n}$ includes 3 sub-linear problems.

Therefore, there are $T(T+1)/2$ computations of sub-linear problems in the best case and at most $9T(T+1)/2$ computations of sub-linear problems in the worst case. The computational complexity of our algorithm is $O(T^{2}\psi)$, where $\psi$ is the computational complexity of the algorithm for the sub-linear problems. 

Without integer variables, the sub-linear problems can be solved by polynomial interior point algorithm. For the common used polynomial interior point algorithm by \citet{karmarkar1984new}, $\psi$ is $O(T^{3.5}\mathcal{L})$, where $\mathcal{L}$ denotes the
total length of the binary coding of the input data. This is the reason why we remove integer variables from original mixed integer model and divide it into sub-linear problems. Therefore, total computational complexity of our algorithm is $O(T^{5.5}\mathcal{L})$, which is a polynomial algorithm.

%  (in the case of problems with integer variable), or $\ln(1/\epsilon)$ if the problem under consideration is solved to the
% relative accuracy $\epsilon$ (in the case of problems without integer variable). Details about complexity of different interior point algorithm could be found in \cite{potra2000interior}.  Obviously, for the linear problems with integer variables, $\mathcal{L}$ could be much larger than the case of linear problems without integer variables.

\section{Numerical analysis}
In this section, we first employ some numerical examples to show the influence of initial capital and loan, to the optimal production plan and final capital increment of a manufacturer, and then compare our algorithm with the business software CPLEX. In our numerical experiments, the linear programming algorithm for sub-linear problems is a function in MATLAB based on the paper of \citet{zhang1998solving24}, which is an interior point algorithm. The solution accuracy of the interior algorithm in MATLAB is controlled by the termination tolerance on the function, which we set as 0.0001\%. The number of maximum iterations for interior point algorithm is set to be 50. 

Our algorithm is coded in MATLAB 2016a, and run on a desktop computer with an Intel (R) Core (TM) i5-6500 CPU, at 3.20 GHz, 16GB of RAM, and 64-bit Windows 7 operating system. 

\subsection{Numerical examples about influence of capital flow to production plan}

Assume $T=12$, and goodwill loss rate $\beta=0.5$. The values of some other parameters are listed in Table \ref{tab:parametervalues}. 

\begin{table}[!ht]
\centering\small
\captionsetup{skip=2pt}
\caption{Parameter values}\label{tab:parametervalues}
\begin{tabular}[b]{*{13}{p{0.5cm}<{\raggedright}}}
\toprule
$p_{t}$  &21  &22  &20  &15  &10  &8   &5   &10  &18  &10  &14  &18\\
\midrule
$c_{t}$  &5   &13  &10  &10  &10  &10  &10  &10  &10  &10  &10  &10\\
$h_{t}$  &10  &5   &5   &5   &5   &5   &5   &5   &5   &5   &5   &5 \\
$s_{t}$  &100 &100 &100 &100 &100 &100 &100 &100 &100 &100 &100 &100\\
$d_{t}$  &30  &45  &50  &55  &45  &55  &90  &80  &90  &65  &80  &70 \\
\bottomrule
\end{tabular}
\end{table}
\normalsize

We solve the problem via our algorithm FRH. The solutions are optimal verified by CPLEX. When initial capital is 150 without loan, the optimal production plan is shown by Figure \ref{fig:example1}, in which the manufacturer could only launch two productions because of capital shortage. When initial capital is 200 without loan, the optimal production plan is shown by Figure \ref{fig:example2}. When initial capital is 200 with  loan quantity 300, loan length 3 periods and loan interest rate 10\%, a different optimal production plan is shown by Figure \ref{fig:example3}. Figures in Figure \ref{fig:example123} illustrate that quantity of initial capital and whether or not loan, does influence the optimal production plan of a manufacturer.

\begin{figure}[!ht]
\centering
\subfigure[Production plan with initial capital 150 without loan.]{\label{fig:example1}
\includegraphics[scale=0.8]{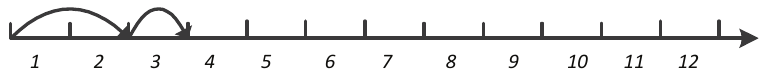}}
\subfigure[Production plan with initial capital 200 without loan.]{\label{fig:example2}
\includegraphics[scale=0.8]{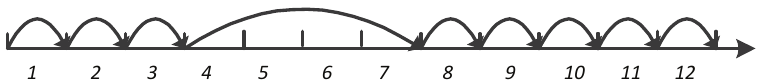}}~~~~
\subfigure[Production plan with initial capital 200 with loan.]{\label{fig:example3}
\includegraphics[scale=0.8]{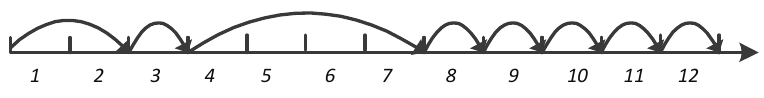}}
\caption{Optimal production plan for different capital situations.}\label{fig:example123}
\end{figure}

Without loan, for different initial capital, maximum final capital increments are displayed by Figure \ref{fig:capitalchanges1}. With fixed quantity of initial capital 200, fixed quantity of loan 300, loan length 3 periods, for different loan interest rates, maximum final capital increments is displayed by Figure \ref{fig:capitalchanges2}.

\begin{filecontents*}{mydata.csv}
Initial capital,  Interest rate, Capital increment1, Capital increment2
50, 0.01, 0,2060
150, 0.05, 70,2023
200, 0.1, 1891,1971
250,0.15, 2300,1913
300, 0.2, 2360,1851
350, 0.25, 2360,1784
400,0.3, 2360,1710
\end{filecontents*}

\begin{figure}[!ht]
\centering
\subfigure[Capital increment with different initial capital.]{\label{fig:capitalchanges1}
\begin{tikzpicture}
\pgfplotsset{xticklabel style={
        /pgf/number format/fixed,
}}
\begin{axis}[xlabel=Initial capital,
    ylabel=Final capital increment, xtick=data,font=\footnotesize,scale=0.8]
\addplot table [x=Initial capital, y=Capital increment1, col sep=comma] {mydata.csv};
\end{axis}
\end{tikzpicture}}
~~~~~~
\subfigure[Capital increment with different loan interest rate.]{\label{fig:capitalchanges2}
\begin{tikzpicture}
\pgfplotsset{xticklabel style={
        /pgf/number format/fixed,
}}
\draw [dashed] (0,2.5) -- (5.5,2.5);
\begin{axis}[xlabel=Loan interest rate,
    ylabel=Final capital increment, xtick=data, font=\footnotesize,scale=0.8]
\addplot table [x=Interest rate, y=Capital increment2, col sep=comma]
{mydata.csv};
\end{axis}
\end{tikzpicture}}
\caption{Changes of final capital increment for different initial capital and loan interest rates.}\label{fig:capitalchanges}
\end{figure}
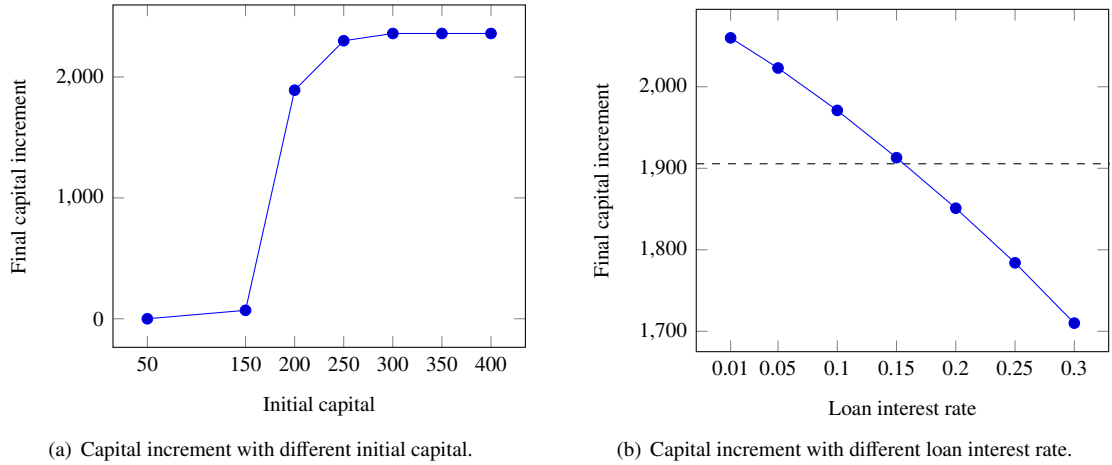

% \begin{figure}[!ht]
% \centering
% \subfigure[Capital increment with different initial capital]{\label{fig:capitalchanges1}
% \includegraphics[scale=0.5]{capitalchanges1.eps}}~~
% \subfigure[Capital increment with different interest rates]{\label{fig:capitalchanges2}
% \includegraphics[scale=0.5]{capitalchanges2.eps}}
% \caption{Changes of final capital increment for different initial capital and interest rates}\label{fig:capitalchanges}
% \end{figure}

In Figure \ref{fig:capitalchanges2}, the dashed line represents the maximum capital increment without loan. Figure \ref{fig:capitalchanges1} shows if capital is not sufficient, more initial capital will bring more final capital increment; if capital is sufficient, maximum final capital increment for a manufacturer is stable. Figure \ref{fig:capitalchanges2} shows loan is helpful for a manufacturer if interest is low; but if interest rate is too high, final capital increment decreases and it is better for the manufacture not to loan. Therefore, the numerical examples above demonstrate initial capital availability as well as loan interest rate can substantially influence the operational decisions for a manufacturer. 

\subsection{Comparison of our algorithm with some other heuristics}
To the best of our knowledge, although there are many heuristic algorithms in the literature dealing with capacity constrained lot sizing problems, those algorithms are not suitable for solving capital flow constrained lot sizing problem like ours. Solutions obtained by those algorithms are not feasible by the definition of capital flow constrains in this paper. Compared with traditional capacity constraints, capital flow constraints are stronger constraints which require initial capital of each period is above this period's total production cost, end-of-period capital of each period is above zero, and capital flow is related with many parameters like selling price, interest rate, etc. 

In terms of the comparison of our algorithm with meta heuristics, we attempt to solve our capital flow constrained lot sizing problem with some meta heuristics: genetic algorithm and simulated annealing algorithm. However, because of the capital flow constraints and other constraints, it is difficult for both genetic algorithm and simulated annealing algorithm to obtain a feasible solution even for a small numerical case. Therefore, we omit the comparison of our algorithm with other heuristic algorithms.

\subsection{Comparison of our algorithm with CPLEX on randomly generated problems}

We test our algorithm on a large set of randomly generated problems with CPLEX 12.6.2. The solution accuracy of CPLEX is controlled by the number of iterations, CPU seconds and the termination tolerance, which are set as 750,000, 18,000, and 0.0001\%, respectively. The randomized scheme of test problem generation is similar to Aksen's work \citep{aksen2007loss19}, and is presented in Table \ref{tab:generate tests}. 

\begin{table}[!ht]
\centering\small
\caption{Randomized generation scheme for the test problems}\label{tab:generate tests}
\begin{tabular}{m{3.5cm}m{1.2cm}m{6cm}}
\toprule
\emph{Parameter} &\emph{No.}  &\emph{Values}\\
\midrule
Planning horizon  &6 & $T=$ 12, 24, 36, 48, 60, 72\\
\specialrule{0em}{2pt}{2pt} 
\multirow{3}*{Demand distribution} &\multirow{3}*{3} &1. \emph{Exponential} with $\mu=150$ \\
                &          &2. \emph{Normal} with $\mu=150$, $\delta^{2}=1600$\\
                &          &3. \emph{Discrete Uniform} [30, 270, 10]\\
\specialrule{0em}{2pt}{2pt} 
Unit production cost &\multirow{2}*{2} &1. Both constant: $c_{t}=13$ and $h_{t}=1$\\
Unit holding cost              &          &2. Both seasonally varying\\
\specialrule{0em}{2pt}{2pt} 
\multirow{3}*{Selling price}  &\multirow{3}*{2} &1. Discrete uniformly distributed \\&&between
[15 25] with increments of 5\\   
&                   &2. Seasonally varying\\
\specialrule{0em}{2pt}{2pt} 
\multirow{2}*{Initial capital}  &\multirow{2}*{2}   
                  &1. $B_{c}=s_{1}+c_{1}(d_{1}+d_{2})$\\
&                   &2. $B_{c}=s_{1}+c_{1}(d_{1}+d_{2}+d_{3})$\\
\specialrule{0em}{2pt}{2pt} 
\multirow{3}*{Initial loan}  &\multirow{3}*{2}   
                  &1. $B_{L}=0$\\
&                   &2. $B_{L}=2000$, with loan length $T_{L}=6$\\
&&~~and loan rate $r=5\%$\\
\specialrule{0em}{2pt}{2pt} 
\multirow{3}*{Goodwill loss rate } &\multirow{3}*{3} &1. $\beta=0$ \\
                &          &2. $\beta=10\%$\\
                &          &3. $\beta=50\%$\\
\specialrule{0em}{2pt}{2pt}                
Setup cost  &1 &Constant: $s_{t}=1000$\\
\bottomrule
\end{tabular}
\end{table}
\normalsize

Since capital and goodwill loss rate can influence optimal production plan, two initial capital, two initial loan and three goodwill loss rates are set for our experiments, while these three parameters in Aksen's 360 test cases \citep{aksen2007loss19} are fixed or not included. As for the initial capital, $B_{c}=s_{1}+c_{1}(d_{1}+d_{2})$ guarantees the manufacture has enough capital for the production of first two periods; and $B_{c}=s_{1}+c_{1}(d_{1}+d_{2}+d_{3})$ guarantees the production of first three periods. There are 864 numerical cases for testing in total. Experimental results for different periods are shown by Table \ref{tab:resultsforperiods}.

\begin{table}[!ht]
\centering\small
\caption{Performance of our algorithm FRH compared with CPLEX for different periods}\label{tab:resultsforperiods}
\begin{tabular}{cccccrr}
\toprule
$T$  &\emph{\tabincell{l}{Num of\\cases}}  &\emph{\tabincell{c}{Num of\\n-opt cases}}  &\emph{\tabincell{c}{Average\\deviation}}  &\emph{\tabincell{c}{Maximum\\deviation}}&\emph{\tabincell{c}{ Avg.FRH\\time ($s$)}}   &\emph{\tabincell{c}{Avg.CPLEX\\time ($s$)}}   \\
\midrule
12 &144 &4 &0.07\% &4.56\% &0.85&0.18\\
24 &144  &5 &0.02\% &4.38\%&3.54&0.33\\
36 &144  &6 &0.03\% &1.29\%&9.06&1.83\\
48 &144 &11 &0.08\% &3.74\%&18.12&136.67\\
60 &144 &15 &0.02\% &0.52\%&32.38&267.87\\
72 &144  &16 &0.05\% &1.59\%&55.76&1062.80\\
\specialrule{0em}{2pt}{2pt} 
\emph{General} &844 &57 &0.05\% &4.56\% &19.95&224.19\\
\bottomrule
\end{tabular}
\end{table}
\normalsize

As shown in Table \ref{tab:resultsforperiods}, our algorithm performs well in the 844 test cases. There are only 57 cases that our algorithm doesn't get optimal solutions. Although there still exist some extreme cases with a maximum deviation 4.56\% that our heuristic adjustment could not reach optimal, it could obtain optimal solutions in most cases and average deviation is 0.05\%. Another finding of the experiment not shown by Table \ref{tab:resultsforperiods} but also should be noted is that, when goodwill loss rate is zero and unit variable production costs are equal, our algorithm all obtain optimal solutions, which validates Theorem \ref{theorem1}.  

In terms of computation time, CPLEX runs faster than our algorithm for small-size problems. However, when problem size grows large, average computation time for CPLEX increase rapidly and is much larger than our algorithm. This is because when $T$ reaches 48, there are some cases that take maximum running time for CPLEX to stop iteration, which boost the average computation time. For the 864 numerical cases, average computation time of FRH algorithm is 19.95s while average computation time for CPLEX is 224.19s.  Therefore, our algorithm is suitable for solving large-size problems.

\subsection{Factors affecting the performance of our algorithm}

In the next stage of experiment, we redesign the generation scheme of the test problems to investigate the influence of parameter values to the performance of our algorithm. In order to save computation time for comparison, we set production horizon length $T$ fixed to be 12, initial loan $B_{L}$ fixed to be 2000, and loaning length $L$ fixed to be 6 in this stage of testing, setup cost $s$ is also set fixed to be 1000. For other parameters, each has 2 generation modes: high fluctuations and low fluctuations with normal distribution, or high values and low values. Details of the generation scheme is displayed by Table \ref{tab:generate tests2}.

\small
\renewcommand\arraystretch{1.5}
\begin{longtable}{lllll}
\caption{Randomized generation scheme in the second stage of testing}\label{tab:generate tests2}\\
\toprule
    &\emph{Low value (fluc.)}  &\emph{High value(fluc.)}\\
\midrule
Demand $D$&$\mu=150, \delta=10$ &$\mu=150, \delta=50$\\
Unit production cost $c$ &$\mu=13, \delta=1$ &$\mu=13, \delta=5$\\
Unit holding cost $h$ &$\mu=5, \delta=0.5$ &$\mu=5, \delta=2.5$\\
Selling price $p$ &$\mu=20, \delta=1$  &$\mu=20, \delta=5$\\
Initial capital $B_{c}$ &$s_{1}+c_{1}(d_{1}+d_{2})$
&$s_{1}+c_{1}\sum\nolimits_{i=1}^{5}d_{i}$\\
Interest rate $r$ &2\% &5\%\\
Goodwill loss rate $\beta$ &10\% &50\%\\
\bottomrule
\end{longtable}
\normalsize

For each combination of those parameters, we generate 10 numerical cases. Therefore, there are $2^{7}\times 10=1280$ cases for testing. Experimental results in this stage are presented by Table \ref{table: results2}.

\renewcommand\arraystretch{1}
\begin{table}[!ht]
\centering\small
\caption{Pivot table of the results in the second stage of testing}\label{table: results2}
\begin{tabular}{m{2.3cm}*{4}{m{1.8cm}}}
\toprule
   &\emph{\tabincell{l}{Num of\\cases}}  &\emph{\tabincell{l}{Num of\\deviation}}  &\emph{\tabincell{l}{Average\\deviation}}   &\emph{\tabincell{l}{Maximum\\deviation}}\\
\midrule
Demand &&&&\\
\specialrule{0em}{1pt}{1pt}
\emph{low fluc.} &640 &38 & 0.14\% & 5.91\%\\
\emph{high fluc.}  &640 &57 & 0.09\% &5.68\%\\
\specialrule{0em}{2pt}{2pt}
Unit prod. cost &&&&\\
\specialrule{0em}{1pt}{1pt}
\emph{low fluc.}&640 &52 & 0.17\% &5.91\%\\
\emph{high fluc.}&640 &44 & 0.06\% &4.48\%\\
\specialrule{0em}{2pt}{2pt}
Unit hold. cost &&&&\\
\specialrule{0em}{1pt}{1pt}
\emph{low fluc.}&640 &53 & 0.13\% &5.91\%\\
\emph{high fluc.}&640 &43 & 0.09\% &5.68\%\\
\specialrule{0em}{2pt}{2pt}
Selling price &&&&\\
\specialrule{0em}{1pt}{1pt}
\emph{low fluc.}&640 &57 & 0.14\% &5.91\%\\
\emph{high fluc.}&640 &39 & 0.09\% &4.48\%\\
\specialrule{0em}{2pt}{2pt}
Initial capital &&&&\\
\specialrule{0em}{1pt}{1pt}
\emph{low}&640 &57 & 0.18\% &5.91\%\\
\emph{high}&640 &39 & 0.05\% &2.48\%\\
\specialrule{0em}{2pt}{2pt}
Interest rate &&&&\\
\specialrule{0em}{1pt}{1pt}
\emph{low}&640 &46 & 0.12\% &5.68\%\\
\emph{high}&640 &50 & 0.09\% &5.91\%\\
\specialrule{0em}{2pt}{2pt}
Goodw. loss rate &&&&\\
\specialrule{0em}{1pt}{1pt}
\emph{low}&640 &46 & 0.03\% &2.78\%\\
\emph{high}&640 &50 & 0.19\% &5.91\%\\
\specialrule{0em}{3pt}{3pt}
General &1280 &96 &0.11\% &5.91\%\\
\bottomrule
\end{tabular}
\end{table}

From table \ref{table: results2}, of all the 1280 numerical cases, there are 96 numerical cases that our algorithm can't reach optimal with maximum deviation error 5.91\% and average deviation 0.11\%. we also find that for goodwill loss rate and initial capital, maximum deviation and average deviation between high and low values differ substantially. It seems goodwill loss rate and initial capital play a main role in affecting the performance of our algorithm. To consolidate this conclusion, we apply stepwise linear regression analysis by SPSS to all the 96 numerical cases that our algorithm has deviations. We set deviation as dependent variable, all the seven parameters as independent variables, confidence interval is 95\%. Analysis of variance (anova) is presented by Figure \ref{fig:anova} and excluded variables by stepwise linear regression is given by Figure \ref{fig:excludedPara}.

\begin{figure}[!ht]
\centering
\includegraphics[scale=0.8]{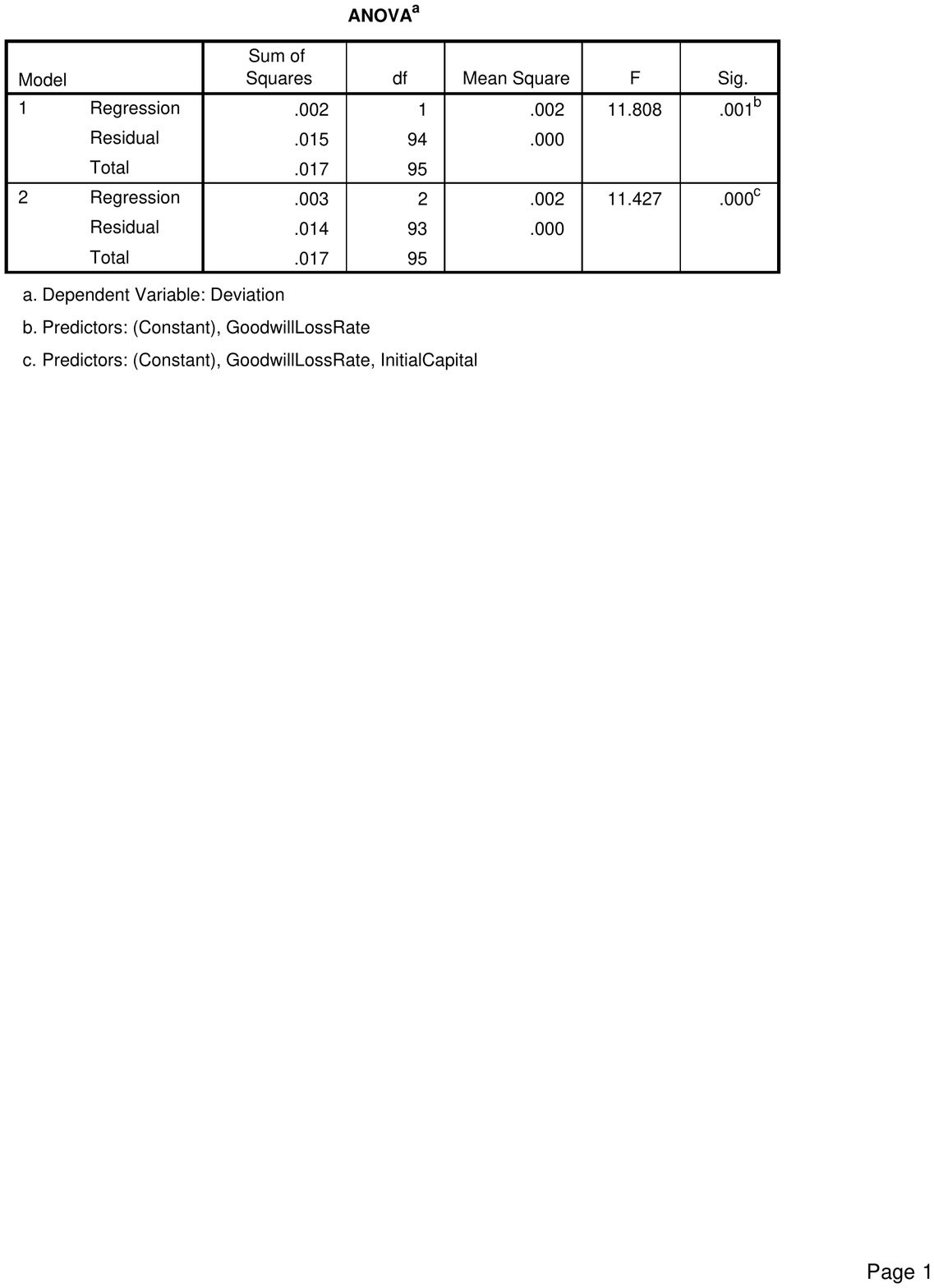}
\caption{Anova of parameters in stepwise linear regression.}\label{fig:anova}
\end{figure}
\begin{figure}[!ht]
\centering
\includegraphics[scale=0.8]{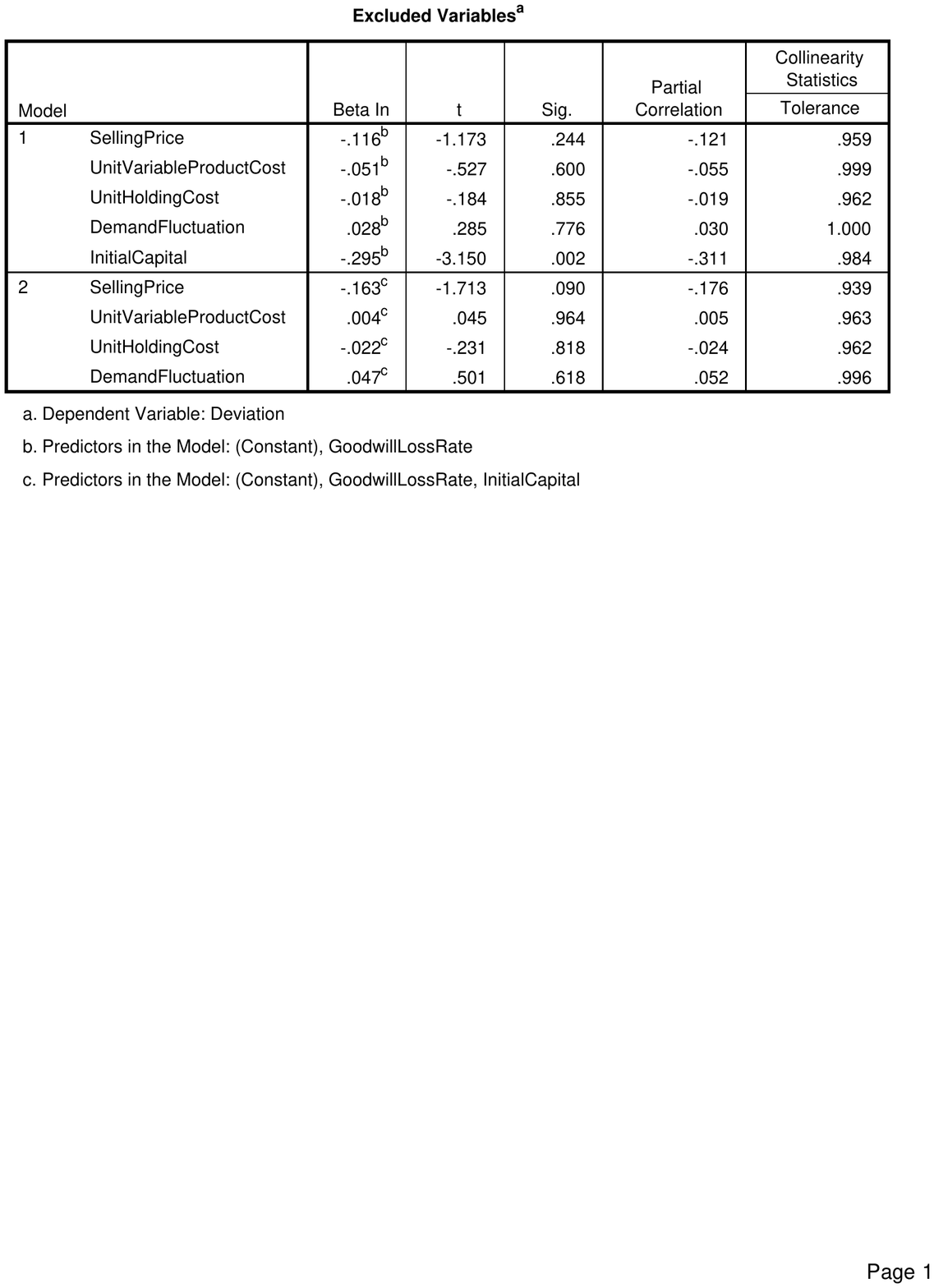}
\caption{Excluded parameters in stepwise linear regression.}\label{fig:excludedPara}
\end{figure}

Figure \ref{fig:anova} shows goodwill loss rate affects deviation the most (significance value $0.001<0.05$); initial capital and goodwill loss rate together have significant influence to the deviation error (significance value $0.000<0.05$), while other parameters are excluded from regression as shown by Figure \ref{fig:excludedPara}. This coincides with the finding in Table \ref{table: results2}. The reason could be: if initial capital is low and goodwill loss rate is high, it is more difficult for the heuristic techniques to adjust original solution to optimal. However, experiments in the two stages demonstrate our algorithm can reach optimal in over 90\% cases and average deviation of our algorithm is rather low; moreover, when goodwill loss rate is zero and unit variable production are equal, our algorithm can definitely obtain optimal solutions.

\section{Conclusions}

Capital shortage is a key factor affecting the growth of many small and medium enterprises. However, capital flow constraints have not been taken into consideration by many lot sizing works. Previous methods such as Wagner-Whitin algorithm \citep{wagner1958dynamic1} and Aksen algorithm \citep{aksen2003single7,aksen2007loss19} for lot sizing problems can not obtain feasible solutions when considering capital flow constraints under the assumptions in our paper. 

We formulate a mathematical model for the lot sizing problem with capital flow constraints. Loss of goodwill and loan are also introduced in our problem. Based on the mathematical properties of the problem, we develop a forward recursion algorithm with heuristic adjustments. When unit variable productions costs are equal and goodwill loss rate is zero, our algorithm can obtain optimal solutions. Under other situations, its average deviation error is rather low. It is suitable to solve large-size problems for its computational efficiency. We also find initial capital availability and loan interest rate can affect a manufacturer's optimal lot sizing decisions

Future research could extend in several directions: first is considering the multi-item model or the stochastic lot sizing problems with  capital flow constraints; second, other financial behaviors, such as trade credit, inventory financing and factoring business could also be taken into account in the lot sizing problem.

% \section*{Conflict of interests}
% The authors declare that there is no conflict of interests
% regarding the publication of this paper.

% \section*{Acknowledgments}
% The research of Zhen Chen is supported by NSFC under
% 71271010 and 71571006.

\section*{Reference}

% \par
% \vspace{10mm}
% \noindent\large\textbf{References}
% \vspace{3mm}

%\renewcommand\refname{Reference}
\bibliographystyle{jors}
\small
\setlength{\bibsep}{0.5ex} % 缩小参考文献间的垂直间距
\bibliography{literature.bib}

\end{document}